\def\margin{2.9cm}
\title{Parameterized Complexity of Biclique Contraction and Balanced Biclique Contraction}
\date{}
\author[1]{R. Krithika}
\author[2]{V. K. Kutty Malu}
\author[3]{Roohani Sharma}
\author[4]{Prafullkumar Tale}
 \affil[1]{Indian Institute of Technology Palakkad, Palakkad, India\\\texttt{krithika@iitpkd.ac.in}}
 \affil[2]{Indian Institute of Technology Palakkad, Palakkad, India\\\texttt{112104002@smail.iitpkd.ac.in}}
 \affil[3]{Institute for Basic Science, South Korea\\ \texttt{roohani@ibs.re.kr}}
 \affil[4]{Indian Institute of Science Education and Research Pune, Pune, India\\ \texttt{prafullkumar@iiserpune.ac.in}}
\setlist{nosep}
\newcommand{\rank}{\textsf{sf}}
\newcommand{\calF}{\mathcal{F}}
\newcommand{\calG}{\mathcal{G}}
\newcommand{\calH}{{\mathcal H}}
\newcommand{\calO}{\ensuremath{{\mathcal O}}}
\newcommand{\calS}{\mathcal{S}}
\newcommand{\calW}{\mathcal{W}}
\newcommand{\true}{\texttt{True}}
\newcommand{\false}{\texttt{False}}
\newcommand{\ETH}{\textsf{ETH}}
\newcommand{\hard}{\textsf{hard}}
\newcommand{\para}{\textsf{para}}
\newcommand{\yes}{\textsc{Yes}}
\newcommand{\no}{\textsc{No}}
\newtheorem{theorem}{Theorem}
\newtheorem{lemma}[theorem]{Lemma}
\newtheorem{observation}[theorem]{Observation}
\newtheorem{proposition}[theorem]{Proposition}
\newtheorem{reduction-rule}{Reduction Rule}%[section]
\newtheorem{branching-rule}{Branching Rule}%[section]
\newtheorem{branching-rule1}{Branching Rule}
\newtheorem{reduction rule}{Reduction Rule}%[section]
\newtheorem{preprocessing rule}{Preprocessing Rule}%[section]
\newtheorem{branching rule}{Branching Rule}%[section]
\newtheorem{marking-scheme}{Marking Scheme}%[section]
\newtheorem{definition}[theorem]{Definition}
\newcommand{\defproblem}[3]{
  \vspace{1mm}
\noindent\fbox{
  \begin{minipage}{0.96\textwidth}
  \begin{tabular*}{\textwidth}{@{\extracolsep{\fill}}lr} #1 \\ \end{tabular*}
  {\bf{Input:}} #2  \\
  {\bf{Question:}} #3
  \end{minipage}
  }
  \vspace{1mm}
}
\begin{document}

\maketitle

\begin{abstract}

 A bipartite graph is called a biclique if it is a complete bipartite graph and a biclique is called a balanced biclique if it has equal number of vertices in both parts of its bipartition. 
In this work, we initiate the complexity study of \textsc{Biclique Contraction} 
and \textsc{Balanced Biclique Contraction}.
In these problems, given as input a graph $G$ and an integer $k$, the objective is to determine whether one can contract at most $k$ 
edges in $G$ to obtain a biclique and a balanced biclique, respectively.
We first prove that these problems are \NP-complete\ 
even when the input graph is bipartite. 
Next, we study the parameterized complexity of these problems and
show that they admit single exponential-time \FPT\ algorithms when 
parameterized by the number $k$ of edge contractions.
Then, we show that \textsc{Balanced Biclique Contraction} admits a quadratic 
vertex kernel while \textsc{Biclique Contraction} does not admit 
any polynomial compression 
(or kernel) unless $\NP \subseteq \coNP/\poly$. 
We also give faster \FPT\ algorithms for contraction to restricted bicliques. 

\end{abstract}

\section{Introduction}

Graph modification problems have been extensively studied in theoretical computer science for their expressive power. The input of a typical graph modification problem consists of a graph $G$ and a positive integer $k$, and the objective is to make at most $k$ modifications to $G$ so that the resulting graph belongs to some specific family $\calF$ of graphs. \textsc{$\calF$-Contraction} problems refer to the variant where the only permitted modifications are edge contractions. Watanabe et al.~\cite{WatanabeAN81} and Asano and Hirata~\cite{AsanoH83} proved that if $\calF$ is closed under edge contractions and satisfies certain specific properties, then \textsc{$\calF$-Contraction} is \NP-complete. Brouwer and Veldman \cite{BrouwerV87} proved that \textsc{$\calF$-Contraction} is \NP-complete\ even when $\mathcal{F}$ is a singleton set consisting of a small graph like a cycle or a path on four vertices. Note that \textsc{$\mathcal{F}$-Vertex Deletion} and \textsc{$\mathcal{F}$-Edge Deletion} (the variants of the modification problems where the only allowed edits are vertex deletions and edge deletions) are trivially solvable when $\mathcal{F}$ is a fixed singleton set. This is one of the many examples that suggest that $\mathcal{F}$-\textsc{Contraction} problems are harder than the analogous vertex and edge deletion counterparts. 

The study of graph modification problems in parameterized complexity led to the discovery of several important techniques in the field and one may argue that it played a central role in the growth of the area  \cite{BliznetsFPP15, BliznetsFPP18, Cao16, Cao17, CaoM15, CaoM16, CrespelleDFG20, DrangeDLS22, DrangeFPV14, DrangeP18, FominKPPV14, FominV13}.  
The contrast in the apparent difficulty of contraction problems when compared to their vertex/edge deletion variants is evident even in this realm. A natural parameter for graph modification problems is the number $k$ of allowed modifications. For a family $\calH$ of graphs, we say that $G$ is $\calH$‐free if for every graph $H \in \calH$, $G$ does not contain $H$ as an induced subgraph. A graph family $\calF$ \emph{admits a forbidden set characterization} if there exists a collection $\calH$ of graphs such that $G \in \calF$ if and only $G$ is $\calH$-free. A result by Cai~\cite{Cai96} states that if $\calF$ is hereditary and admits a finite forbidden set characterization,  then the problem of modification to $\calF$ using any combination of vertex deletions, edge deletions, and edge additions, admits a single exponential-time \FPT\ algorithm. However, Cai and Guo~\cite{CaiG13} and Lokshtanov et al.~\cite{LokshtanovMS13} proved that $\calF$-\textsc{Contraction} is $\W[2]$-hard\ even when $\calF$ admits a finite forbidden set characterization. One of the intuitive reasons for this intractability is that the classical \emph{branching technique} that works for vertex deletion and edge deletion/addition variants does not straightaway work for contractions. Recently, Chakraborty and Sandeep \cite{ChakrabortyS23} studied the problem of contracting to an $\calH$‐free graph where $\calH$ is a singleton set and showed tractability and intractability results for various choices of $\calH$.

In spite of the inherent difficulty, \FPT\ algorithms for $\calF$-\textsc{Contraction} for several graph classes $\calF$ are known. If every graph in $\calF$ has bounded treewidth as in the case of $\calF$ being paths, trees, or cactus graphs, one may use Courcelle's theorem to show the existence of \FPT\ algorithms for $\calF$-\textsc{Contraction} (see~\cite[Chapter 7]{CyganFKLMPPS15} for related definitions and arguments). For other cases of $\calF$-\textsc{Contraction}, \FPT\ algorithms have been obtained using problem-specific techniques and arguments that typically involve deep insights into the structure of $\calF$. The first \FPT\ algorithm for \textsc{Bipartite Contraction} involves an interesting combination of techniques like iterative compression, important separators, and irrelevant vertices~\cite{HeggernesHLP13}, and the improved algorithm for the problem involves non-trivial applications of important separators~\cite{GuillemotM13}. \textsc{Planar Contraction} was shown to be \FPT~\cite{GolovachHP13} using the irrelevant vertex technique combined with an application of Courcelle’s theorem. 

On the other hand, \textsc{Clique Contraction} admits a relatively simpler \FPT\ algorithm running in $\calO^*(2^{\mathcal{O}(k \log k)})$ \footnote[8]{$\calO^*(\cdot)$ notation suppresses the factors that are polynomial in the input size.} time \cite{CaiG13}. This algorithm relies on the observation that if one can obtain a clique from a graph $G$ by contracting $k$ edges, then one can obtain a clique from $G$ by deleting at most $2k$ vertices (that are endpoints of the contracted edges). That is, if $(G, k)$ is a \yes-instance of \textsc{Clique Contraction}, then $V(G)$ can be partitioned into sets $X$ and $Y$ such that the cardinality of $X$ is at most $2k$ and $Y$ induces a clique. One can find such a partition (if it exists) in \FPT\ time. Then the algorithm guesses the solution edges in $E(X) \cup E(X,Y)$ and proceeds with branching. It is surprising that this simple algorithm is optimal under the Exponential-time Hypothesis (\ETH)~\cite{FominLMSZ21}. This made us wonder if such results hold for the closely related problem of \textsc{Biclique Contraction}.

\defproblem{\textsc{Biclique Contraction}}{A graph $G$ and an integer $k$.}{Can we contract at most $k$ edges in $G$ to obtain a biclique?}

\noindent We call the variant of \textsc{Biclique Contraction} where we require the resultant graph to be a balanced biclique as \textsc{Balanced Biclique Contraction}. To our pleasant surprise, we are able to show that both the problems admit simple single exponential-time \FPT\ algorithms and that neither of the problems admit subexponential-time \FPT\ algorithms.

To the best of our knowledge, even the classical complexity of \textsc{Biclique Contraction} and \textsc{Balanced Biclique Contraction} is not known in the literature. As bicliques and balanced bicliques are not closed under edge contractions, one cannot use the results by Asano and Hirata~\cite{AsanoH83} to show the \NP-hardness of these problems.
Ito et al.~\cite[Theorem~2]{ItoKPT11} showed that for each $p, q \ge 2$,  $\{K_{p,q}\}$-\textsc{Contraction} is \NP-complete where $K_{p,q}$ denotes the biclique with $p$ vertices in one part of the bipartition and $q$ vertices in the other.  This does not imply the \NP-hardness of 
{\sc Biclique Contraction} and {\sc Balanced Biclique Contraction}. Indeed,  if $G$ is a \yes-instance of $\{K_{p,q}\}$-\textsc{Contraction}, 
then for $k = |V(G)| - (p+q)$, $(G, k)$ is a \yes-instance of \textsc{Biclique Contraction}, however, the converse is not necessarily true.  A similar argument holds for \textsc{Balanced Biclique Contraction}.

The vertex-deletion variant of the problems of deleting $k$ vertices to get a biclique or a balanced biclique has received considerable attention in the literature. To be consistent with our terminology, we call these problems as \textsc{Biclique Vertex Deletion} and \textsc{Balanced Biclique Vertex Deletion}. \textsc{Biclique Vertex Deletion} is polynomial-time solvable on bipartite graphs but \NP-complete\ in general \cite[Problem GT24]{GareyJ79}. However, \textsc{Balanced Biclique Vertex Deletion} is \NP-complete\ even on bipartite graphs \cite[Problem GT24]{GareyJ79}. Our first result is that the analogous contraction problems are \NP-complete\ and they remain so even on bipartite graphs.

\begin{theorem}
\label{thm:np-hardness}
{\sc Biclique Contraction} and {\sc Balanced Biclique Contraction} are {\em \NP-complete} even when the input graph is bipartite.
\end{theorem}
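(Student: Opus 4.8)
The plan is to first settle membership in \NP\ and then give a polynomial-time reduction from a known \NP-complete problem whose instances are bipartite. Membership is immediate: given a candidate set $F$ of at most $k$ edges, contracting $F$ and testing whether the resulting graph is a biclique (respectively a balanced biclique) takes polynomial time, so a nondeterministically guessed solution is verifiable efficiently.

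Before reducing, I would record a structural reformulation that makes the contraction target easy to reason about. Contracting $G$ to a biclique amounts to partitioning $V(G)$ into connected \emph{witness sets}, two-coloring these sets as the two sides of the biclique, and requiring no edge between equal-colored sets and at least one edge between every pair of oppositely-colored sets. Since a witness set of one color is connected and has no edge to any other set of the same color, it must be exactly a connected component of the subgraph induced by that color class. Hence $G$ contracts to a biclique using at most $k$ edges if and only if there is a partition $V(G) = V_A \sqcup V_B$ with $V_A, V_B \neq \emptyset$ such that every connected component of $G[V_A]$ has a neighbor in every connected component of $G[V_B]$, and the total number of components of $G[V_A]$ and of $G[V_B]$ is at least $|V(G)| - k$; for the balanced variant one additionally requires these two component counts to be equal. (The number of contracted edges equals $|V(G)|$ minus the number of witness sets.) Thus the problem becomes: partition the vertices into two sides so that the ``component-adjacency'' between the sides is complete, while \emph{maximizing} the number of monochromatic components.

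With this reformulation in hand, I would reduce from an \NP-complete problem on bipartite graphs, such as \textsc{Balanced Complete Bipartite Subgraph} (or, if the gadget arithmetic turns out cleaner, from a suitable variant of \textsc{3-Sat}). Given a bipartite instance $H$ with parts $P, Q$ and target $t$, I would build a bipartite graph $G$ consisting of $H$ together with two \emph{hub} vertices $\alpha$ and $\beta$, where $\alpha$ is made adjacent to all of $Q$ (and to $\beta$) and $\beta$ adjacent to all of $P$ (and to $\alpha$); this keeps $G$ bipartite with parts $P \cup \{\alpha\}$ and $Q \cup \{\beta\}$ and makes each hub universal to the opposite side. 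The intended solution places a chosen biclique $(P', Q')$ of $H$ as singleton witness sets on the two sides (their complete cross-adjacency is exactly the biclique property in $H$), while every remaining vertex is absorbed into the hub component on its side; the hubs guarantee complete component-adjacency for free. A short computation then shows that the number of contractions in the intended solution is controlled by the size of the core, so setting $k$ appropriately makes $(G,k)$ a \yes-instance precisely when $H$ contains a $K_{t,t}$. The symmetry of the construction is what I would exploit to handle the balanced variant simultaneously, forcing the two sides to carry equal numbers of components.

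The main obstacle is the reverse direction: I must show that no partition can ``cheat'' by creating many small absorbing components to beat the intended budget. The key point to establish is that a non-core vertex $v$ cannot sit in a small witness set, because its component would have to be adjacent to every component on the opposite side — including the core singletons — and by design $v$ reaches those only through a hub, which forces $v$ into the hub component. Making this forcing argument airtight (ruling out size-two and other intermediate components, and controlling the interaction with the core singletons using the fact that two oppositely-colored singletons must be adjacent, which pins the singletons down to a biclique of $H$) is where the real work lies, and it is precisely this analysis that dictates the exact adjacencies chosen in the gadget. Once it is in place, the equality of component counts required by the balanced version follows from the same case analysis.
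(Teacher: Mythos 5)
Your \NP-membership argument and the reformulation (witness sets of one color are exactly the components of that color class, and the number of contractions equals $|V(G)|$ minus the number of witness sets) are both correct, and they match the paper's own ``$k$-constrained valid partition'' lemmas. The reduction, however, has a fatal flaw for the unbalanced problem, and it is located exactly in the step you defer as ``where the real work lies.'' No forcing argument can close that gap, because the construction as described is simply incorrect: the adjacency requirement between the two sides is vacuously easy to meet when one side consists of a \emph{single} component. Concretely, with your hubs $\alpha$ (universal to $Q\cup\{\beta\}$) and $\beta$ (universal to $P\cup\{\alpha\}$), and intended budget $k=|P|+|Q|-2t$, take $V_B$ to be \emph{any} $2t+1$ vertices of $Q$ (an independent set) and $V_A$ everything else. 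Then $G[V_A]$ is connected ($\alpha$ reaches $\beta$ and the remaining $Q$-vertices, $\beta$ reaches all of $P$), every singleton of $V_B$ is adjacent to $\alpha$, and the component counts are $1$ and $2t+1$, i.e.\ $2t+2$ witness sets and exactly $k$ contractions. So $(G,k)$ is a \yes-instance of \textsc{Biclique Contraction} whenever $|Q|\ge 2t+1$, independently of whether $H$ contains a $K_{t,t}$; the reduction maps \no-instances to \yes-instances. Your intuition that non-core vertices ``must'' be absorbed into hub components fails because a singleton only needs \emph{some} neighbor in each opposite component, and when the opposite side is one big component this constrains nothing.

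Interestingly, your construction does appear sound for the \emph{balanced} variant: if both hubs sat on the same side, that side would be a single component (each hub is universal to one part), contradicting balance with $t+1\ge 2$ components per side; with $\alpha,\beta$ on opposite sides, each side decomposes into one hub component plus singletons from $P$ (resp.\ $Q$), and cross-adjacency of the singletons yields a $K_{t,t}$ in $H$. But this analysis is absent from your write-up, and it cannot be ``extended simultaneously'' to the unbalanced case as you hope. The paper's strategy is instructive here: for \textsc{Biclique Contraction} it reduces from \textsc{Red-Blue Dominating Set} and \emph{exploits} precisely the degenerate structure that breaks your gadget --- one connected side versus all singletons --- so that the budget counts a spanning tree of the connected side and thereby encodes domination; for \textsc{Balanced Biclique Contraction} it reduces from \textsc{Hypergraph $2$-Coloring}, using two large bicliques $L,R$ of size $3k+1$ to force $3k+1$ singleton components on each side and letting the balance condition encode a proper $2$-coloring. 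In short: your balanced gadget is salvageable with the missing case analysis supplied, but the unbalanced half of the theorem needs a genuinely different reduction.
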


We reduce \textsc{Red-Blue Dominating Set} to {\sc Biclique Contraction} and \textsc{Hypergraph $2$-Coloring} to {\sc Balanced Biclique Contraction} to show Theorem \ref{thm:np-hardness}. It is well-known that there are linear size-preserving reductions from \textsc{3-SAT} to \textsc{Red-Blue Dominating Set} and from \textsc{3-SAT} to \textsc{Hypergraph $2$-Coloring} \cite{CyganFKLMPPS15,Karp72,Schaefer78}. As the reductions used to prove Theorem~\ref{thm:np-hardness} are also linear size-preserving, it follows that {\sc Biclique Contraction} and {\sc Balanced Biclique Contraction} do not admit algorithms running in $\calO^*(2^{o(n)})$ time (and hence do not admit $\calO^*(2^{o(k)})$ time algorithms) assuming \ETH. However, as mentioned earlier, we show that both problems admit single exponential-time \FPT\ algorithms. 

\begin{theorem}
\label{thm:fpt-algo}
{\sc Biclique Contraction} and {\sc Balanced Biclique Contraction} can be solved in $\mathcal{O}^*(25.904^k)$ time.
\end{theorem}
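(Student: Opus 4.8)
The plan is to reformulate both problems as a question about $2$-colorings and then exploit the fact that a small budget forces the contracted structure to be almost entirely predetermined. The first step is a structural lemma: contracting edges of $G$ to obtain a biclique is the same as choosing a bipartition $(L,R)$ of $V(G)$ and contracting every connected component of $G[L]$ and of $G[R]$ to a single vertex. Indeed, the witness sets mapped to one side of the target biclique are connected and pairwise non-adjacent, so they must be exactly the connected components of $G[L]$ (and symmetrically for $R$), while the cross edges of the biclique force every component of $G[L]$ to send at least one edge to every component of $G[R]$. I would record this as the equivalence: $(G,k)$ is a \yes-instance if and only if there is a bipartition $(L,R)$ such that (i) every component of $G[L]$ is joined by an edge to every component of $G[R]$, and (ii) $n-c_L-c_R\le k$, where $c_L,c_R$ are the numbers of components of $G[L]$ and $G[R]$; for the balanced variant one additionally requires $c_L=c_R$. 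Minimizing contractions is exactly maximizing $c_L+c_R$.

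Next I would extract the bounded part. Since every component of size at least $2$ accounts for at least one contraction, a \yes-instance has at most $k$ such ``big'' components, spanning a set $X$ with $|X|\le 2k$; all remaining vertices are singleton components, so $G-X$ is a complete bipartite graph. The key combinatorial fact is that complete bipartite graphs are precisely the $\{K_3,\,K_1\cup K_2\}$-free graphs (triangle-free and complete multipartite), and both obstructions have three vertices. Phase~A of the algorithm therefore branches on an arbitrary induced copy of one of these two obstructions, deleting one of its three vertices into $X$, with total budget $2k$; the search tree has at most $3^{2k}$ leaves, and for a \yes-instance there is a root-to-leaf path that deletes only vertices of the solution's big-component set, so some leaf returns a set $X$ contained in that set with $G-X$ complete bipartite, say with sides $A$ and $B$.

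In Phase~B, given such an $X$, the only remaining freedom is the side each vertex takes in $(L,R)$: once the sides are fixed, the witness sets are forced to be the connected components of $G[L]$ and $G[R]$, so no set-partition guessing (and hence no $k^{\mathcal{O}(k)}$ factor) is needed. Mirroring the clique-contraction argument that a vertex adjacent to everything may be kept as a singleton, I would argue that there is an optimal solution in which the vertices of $Y=V\setminus X$ stay singletons and keep the sides dictated by the (essentially unique) bipartition of $G[Y]$, so it suffices to guess the side of each of the at most $2k$ vertices of $X$, giving $2^{2k}$ subcases. For each complete side-assignment one computes the components of $G[L]$ and $G[R]$, checks condition (i) and the budget (ii) in polynomial time, and for the balanced variant additionally checks $c_L=c_R$. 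The degenerate case where $G[Y]$ is edgeless (so its bipartition is not unique) I would treat separately.

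The hard part will be the interaction between the two phases: I must show that committing to an inclusion-minimal deletion set $X$ (which may leave some of the solution's big-component vertices inside $Y$) does not destroy all solutions, i.e., that those vertices can in fact be realized as singletons on the sides forced by $G[Y]$. The cleanest route is to prove the ``singletons stay singletons'' exchange argument at the structural level, exhibiting an optimal solution whose big components lie inside every inclusion-minimal deletion set; the balanced case then needs the extra argument that the two component counts can be equalized within the budget. Finally, multiplying the two phases naively yields $3^{2k}\cdot 2^{2k}=36^{k}$, which is worse than claimed; to reach the stated base I would interleave the phases, charging side-choices and deletions against a shared budget and using the sharper branching vectors available for the two three-vertex obstructions, and then carry out the (routine) branching-vector optimization, giving the running time $\mathcal{O}^*(25.904^{k})$.
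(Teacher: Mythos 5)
Your skeleton matches the paper's: the reformulation as a bipartition $(L,R)$ in which the witness sets are the components of $G[L]$ and $G[R]$, with cost $n-c_L-c_R\le k$, is exactly the paper's notion of a $k$-constrained valid partition (Lemma~\ref{lem:bc-part}), and the $2k$-vertex modulator with the $\{K_3, K_1+K_2\}$ characterization is Observation~\ref{obs:bc-vertex-sol} plus Lemma~\ref{lemma:biclique-forbidden-char}. Your Phase~A (branch $3$-ways on a three-vertex obstruction) is a legitimate alternative to the paper's use of the $\mathcal{O}^*(1.4^{|Z|})$ \textsc{2-Cluster Vertex Deletion} subroutine of H\"{u}ffner et al.\ applied to $\overline{G}$, though it is quantitatively worse ($3^{2k}=9^k$ versus $1.4^{2k}<2^k$). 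The real problem is Phase~B. The exchange claim on which everything rests --- that some optimal solution keeps every vertex of $Y=V\setminus X$ a singleton on the side dictated by the bipartition of $G[Y]$ --- is precisely the hard content of the theorem, and it is false as stated. It fails exactly in the configurations the paper spends most of its proof on: when the solution splits one side of the biclique $G-X$ across $L$ and $R$ (the paper's Cases~2(b), 3(a), 3(b); splitting is only excluded when the relevant side has more than $k+1$ vertices, so small sides genuinely can and sometimes must split), and above all when $G[Y]$ is edgeless so no side is dictated at all --- which is not a removable ``degenerate case'' but the paper's Case~1(b), the bottleneck of the whole algorithm. There, a vertex $y\in Y$ with neighbours in both $Z_L$ and $Z_R$ cannot remain a trivial piece on either side, and its placement is resolved by a dedicated branching rule with branching vector $(1,2)$ (contributing the $1.619^k$ factor), a preprocessing rule for the degree-$2$ vertices it leaves behind, and an additional $2^{|Z|}$ guess of a subset $Z'\subseteq Z_L$ needed to decide which vertices of $Y$ end up as trivial components. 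Your proposal defers all of this to ``treat separately'' and ``the hard part,'' i.e., it omits the proof of the theorem's core.

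The running-time claim is also unsubstantiated. Your honest count is $3^{2k}\cdot 2^{2k}=36^k$, and the concluding appeal to ``routine branching-vector optimization'' to reach $25.904^k$ is a hope, not an argument; nothing in your setup produces that base. In the paper the constant is structural: given a modulator $Z$ the instance is solved in $\mathcal{O}^*(4^{|Z|}\cdot 1.619^k)$ time (the $4^{|Z|}=2^{2|Z|}$ coming from enumerating $\langle Z_L,Z_R\rangle$ together with the subset $Z'$, and $1.619^k$ from the $(1,2)$-branching), which with $|Z|\le 2k$ gives $16^k\cdot 1.619^k=25.904^k$, while the modulator itself is found in the dominated time $1.4^{2k}$. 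So to recover the stated bound you would in any case have to abandon the $3$-way obstruction branching for the faster modulator subroutine and then re-derive the case analysis you skipped. One smaller point in your favour: your observation for the balanced variant is essentially right --- once a full side-assignment is fixed the partition, and hence the component counts, are determined, so only the final feasibility check changes, exactly as in the paper; no separate ``equalization'' argument is needed.
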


The only other known cases when $\calF$-\textsc{Contraction} admits a single exponential-time \FPT\ algorithm is when $\calF$ is the collection of paths~\cite{HeggernesHLLP14}, trees~\cite{HeggernesHLLP14},  cactus~\cite{KrithikaMT18} or grids~\cite{SaurabhST20}. Among these results, the algorithm for paths is relatively simple, the one for trees is obtained using a non-trivial application of the color coding technique, and the last two use relatively more technical problem-specific arguments. The techniques used in our algorithms include an \FPT\ algorithm for \textsc{2-Cluster Vertex Deletion} as a subroutine, rephrasing the contraction problem as a partition problem with some properties, guessing certain vertices to be in appropriate parts of this partition, preprocessing the graph based on the guess and a branching rule.

{\sc ${\cal F}$-Contraction} for most choices of $\mathcal{F}$ mentioned here (except paths and grids) do not admit polynomial kernels. 
While we show {\sc Balanced Biclique Contraction} to be an exception, it turns out that {\sc Biclique Contraction} is not. 

\begin{theorem}
\label{thm:kernels}
{\sc Balanced Biclique Contraction} admits a kernel with $\mathcal{O}(k^2)$  vertices.
However,  {\sc Biclique Contraction} does not admit any polynomial compression or kernel unless \NP\ $\subseteq$ \coNP/poly.
\end{theorem}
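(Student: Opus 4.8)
The plan is to treat the two halves separately, beginning with the positive result. For the quadratic kernel for \textsc{Balanced Biclique Contraction}, I would first record the witness-structure reformulation implicit in Theorem~\ref{thm:fpt-algo}: $(G,k)$ is a \yes-instance if and only if $V(G)$ can be partitioned into connected \emph{witness sets} that split into two sides $\calA$ and $\calB$ with $|\calA|=|\calB|$, such that no two witness sets on the same side are adjacent, every witness set of $\calA$ is adjacent to every witness set of $\calB$, and the number of contractions $|V(G)|-|\calA|-|\calB|$ is at most $k$. Since each non-singleton witness set accounts for at least one contraction, at most $k$ witness sets are non-singletons and together they contain at most $2k$ vertices. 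Hence any solution yields a set $T$ with $|T|\le 2k$ such that $G-T$ is a complete bipartite graph whose two sides differ in size by at most $k$. This structural lemma is the backbone of all reduction rules.

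Next I would design reduction rules bounding $|V(G)|$ by $\calO(k^2)$. The first observation is that two vertices with the same open neighbourhood (false twins) can never be separated onto opposite sides, so every false-twin class lies entirely within one side; as at most $k$ vertices of a class can belong to non-singleton witness sets, all but $\calO(k)$ members of a large class are interchangeable singletons. This gives a rule capping the size of each false-twin class, with the balance constraint handled by reducing the two sides in a coordinated fashion (and by solving the residual instance directly once its size is bounded). The remaining, more delicate rules aim to bound the number of \emph{distinct} neighbourhoods: a singleton destined for side $\calA$ is constrained only to avoid the $\calA$-part of $T$ and to hit each $\calB$-witness set meeting $T$, so for each of the at most $2k$ vertices of $T$ it suffices to retain $\calO(k)$ private candidate neighbours as witnesses of adjacency, after which the other singletons become irrelevant and are deleted. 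Charging the at most $2k$ special vertices against the $\calO(k)$ retained witnesses per vertex then yields the $\calO(k^2)$ bound.

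The main obstacle in the kernelisation is proving that these deletions are \emph{safe} even though the partition into touched and singleton vertices, the assignment of vertices to the two sides, and the target value $t$ are all unknown a priori; in particular, maintaining the exact balance $|\calA|=|\calB|$ under deletions is the subtle point. I expect the safeness proofs to rest on a marking-and-replacement argument showing that any witness structure of the original graph can be rerouted through the retained representatives without increasing the number of contractions, and conversely.

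For the negative result, the plan is to rule out polynomial compression for \textsc{Biclique Contraction} through a polynomial parameter transformation from a parameterisation of \textsc{Red-Blue Dominating Set} (equivalently, \textsc{Set Cover} parameterised by the universe size, or \textsc{Hitting Set} parameterised by the number of sets) that is known not to admit a polynomial compression unless \NP\ $\subseteq$ \coNP/poly. Concretely, I would revisit the \NP-hardness reduction from \textsc{Red-Blue Dominating Set} behind Theorem~\ref{thm:np-hardness} and verify, or minimally adapt it, so that the produced budget $k$ is bounded by a polynomial in the source parameter carrying the incompressibility rather than in the whole instance size. Since a polynomial parameter transformation from a problem without polynomial compression to an \NP-hard target rules out polynomial compression for the target (and a polynomial kernel is a special case of compression), this gives the second half of the theorem. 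The hard part is arranging the reduction so that $k$ stays polynomially bounded in the chosen source parameter while keeping the gadget correct; should the existing reduction not already achieve this, I would instead set up an OR-cross-composition that merges many bipartite \textsc{Biclique Contraction} instances into one whose budget depends only polynomially on the largest input and logarithmically on their number.
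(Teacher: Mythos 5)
Your negative half matches the paper exactly: the paper proves the incompressibility of \textsc{Biclique Contraction} precisely by observing that the reduction from \textsc{Red-Blue Dominating Set} in Section~\ref{subsec:np-hard-biclique} produces $k=|B|+\kappa$ and is therefore a polynomial parameter transformation from a parameterization known to be incompressible \cite{DomLS14}; your fallback cross-composition is unnecessary. So that part is fine.

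The positive half, however, has genuine gaps. First, your opening structural claim --- that false twins ``can never be separated onto opposite sides'' --- is false as stated: in $G=K_{2,2}$ with sides $\{a,b\}$, $\{c,d\}$ and $k\ge 2$, the partition $\langle\{a,c\},\{b,d\}\rangle$ is a $k$-constrained valid balanced partition separating the false twins $a,b$ (it contracts $G$ to $K_{1,1}$). Only \emph{singleton} witness sets with equal open neighbourhoods cannot lie on opposite sides, so at most $\calO(k)$ members of a class may stray; your twin-capping rule needs this weaker statement and a correspondingly more careful safeness proof. Second, your modulator $T$ is solution-dependent; a kernel needs a polynomial-time computable one, which the paper obtains by greedily packing vertex-disjoint copies of $K_3$ and $K_1+K_2$ (giving $|Z|\le 6k$, Reduction Rule~\ref{rr:trivial-ker}). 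Third, and most importantly, the entire burden of the paper's proof lies exactly where you write ``I expect the safeness proofs to rest on a marking-and-replacement argument.'' The paper makes this work through a chain you have no analogue of: a size rule ($|Y|\le|X|+|Z|+k$, with $|Y|\ge k+3$) that \emph{forces} every valid balanced partition to place all of $X$ on one side and all of $Y$ on the other; a classification of $Z$ into $Z_X$, $Z_Y$ (vertices with $\ge k+1$ neighbours in $Y$, resp.\ $X$, whose side is thereby forced, so incident edges can be contracted) and $Z'$ (whose members have $\le k$ neighbours on each side, which is what bounds the marked set $N(Z')\cap(X\cup Y)$ by $\calO(k^2)$); and a marking rule that, besides neighbours of $Z'$, marks for each $z\in Z$ one \emph{non-neighbour} in each of $X\setminus N(Z')$ and $Y\setminus N(Z')$. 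These marked non-neighbours are precisely what make the converse direction of safety go through (Lemma~\ref{lem:rr-unmark}): when a deleted pair $u,v$ is re-inserted as trivial components, one must show $u$ is adjacent to every component on the other side, and the paper derives a contradiction via the marked non-neighbour $u^*$. Your plan of keeping only ``$\calO(k)$ private candidate neighbours'' per modulator vertex contains no mechanism for this re-insertion argument, and without the forcing property you cannot even pin down which side a singleton belongs to; so as it stands the deletion rule's safety --- the crux of the theorem --- is unproven.
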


As mentioned earlier, it is known that for each $p, q \ge 2$,  $\{K_{p,q}\}$-\textsc{Contraction} is \NP-complete. Observe that there is an easy brute-force algorithm (mentioned in Section~\ref{sec:preliminaries}) that solves this problem in $\mathcal{O}^*(2^k)$ time for each pair of integers $p,q \geq 2$. We add to this line of study by studying the complexity of $K_{p,*}$-\textsc{Contraction} where the objective is to determine if the input graph can be contracted to a biclique with exactly $p$ vertices in one part using at most $k$ edge contractions. Note that $p$ is fixed as part of the problem definition. %We show that for each $c \in \mathbb{N}^+$, $K_{c,*}$-\textsc{Contraction} is \NP-complete and can be solved in $\mathcal{O}^*(3^k)$ time. 

\begin{theorem}
\label{thm:np-hard-kcq-correctness}
$K_{p, *}$-\textsc{Contraction} is \NP-complete, for each $p \in \mathbb{N}^+$.  
\end{theorem}

\begin{theorem}
\label{thm:fpt-algo-c-witness-set-in-left}
 $K_{1, *}$-\textsc{Contraction} can be solved in $\mathcal{O}^*(2^k)$ time and $K_{p, *}$-\textsc{Contraction} can be solved in $\mathcal{O}^*(3^k)$ time for each $p \geq 2$.
\end{theorem}

\section{Preliminaries}
\label{sec:preliminaries}

For details on parameterized algorithms, we refer to standard books in the area \cite{CyganFKLMPPS15, FominLSZ19}. For a positive integer $q$, we denote the set $\{1, 2, \dots, q\}$ by $[q]$. A partition of a set $S$ into disjoint sets $S_1,\ldots,S_\ell$ is denoted as $\langle S_1,\ldots,S_\ell \rangle$. An {\em ordered partition} is one where the parts are ordered. 
%We use $\mathbb{N}$ to denote the collection of all non-negative integers. 

For an undirected graph $G$, $V(G)$ and $E(G)$ denote its set of vertices and edges, respectively. 
The {\em size} of a graph is the number of edges in it. 
A graph is {\em non-trivial} if it has at least one edge. 
We denote an edge with endpoints $u$ and $v$ as $uv$. 
Two vertices $u$ and $v$ in $V(G)$ are \emph{adjacent} if $uv$ is an edge in $G$.
The \emph{neighbourhood} of a vertex $v$, denoted by $N_G(v)$, is the set of vertices adjacent to $v$. The {\em degree} of a vertex is the size of its neighbourhood. 
A vertex $u$ is a \emph{pendant vertex} if its degree is one. 
We omit the subscript in the notation for the neighbourhood if the graph under consideration is clear. 
Subdividing an edge $uv$ results in its deletion followed by the addition of a new vertex adjacent to $u$ and $v$. 
For a subset $S \subseteq V(G)$ of vertices, $N[S]=\bigcup_{v \in S} N(v) \cup \{v\}$ and $N(S) = N[S] \setminus S$.
We denote the subgraph of $G$ induced on the set $S$ by $G[S]$. 
A subset $S\subseteq V(G)$ is said to be a \emph{connected set} if $G[S]$ is connected. 
For two subsets $S_1, S_2  \subseteq V(G)$, $E_G(S_1, S_2)$ denotes the set of edges with one endpoint in $S_1$ and the other endpoint in $S_2$. With a slight abuse of notation, for a set $S \subseteq V(G)$, we use $E_G(S)$ to denote $E_G(S, S)$. 
We say that $S_1, S_2$ are adjacent (or that the graphs $G[S_1]$ and $G[S_2]$ are adjacent) if $E_G(S_1, S_2) \neq \emptyset$. 
We omit the subscript in these notations if the graph under consideration is clear. %
The {\em disjoint union} of graphs $G$ and $H$, denoted by $G+H$, is the graph with vertex set $V(H) \cup V(G)$ and edge set $E(H) \cup E(G)$ where $V(G)$ and $V(H)$ are renamed (if necessary) such that $V(G) \cap V(H) =\emptyset$. 
For a graph $G$, $\overline{G}$ denotes its complement. 

A {\em path} is a sequence of distinct vertices in which any two consecutive vertices are adjacent. 
A {\em cycle} is a path in which the first and last vertices are adjacent.  
A graph is {\em connected} if there is a path between every pair of distinct vertices. 
A {\em component} is a maximal connected subgraph of a graph. 
A \emph{spanning tree} of a connected graph is a connected acyclic subgraph that has all the vertices of the graph.
A \emph{spanning forest} of a graph is a collection of spanning trees of its components.
A set of vertices is called a \emph{clique} if any two vertices in it are adjacent.
A {\em complete graph} is one whose vertex set is a clique and a complete graph on $n$ vertices is denoted by $K_n$.
A set of vertices $S$ is said to be an \emph{independent set} if no two vertices in $S$ are adjacent. Throughout this paper, the input graph will be assumed to be connected. 

A graph $G$ is \emph{bipartite} if its vertex set can be partitioned into 
two independent sets $X$, $Y$, and $\langle X, Y \rangle$ is called a {\em bipartition} of $G$. It it well-known that a graph $G$ is bipartite if and only if $G$ has no odd cycles. 
A bipartite graph $G$ is a \emph{biclique} if it has a bipartition $\langle X, Y \rangle$ such that every vertex in $X$ is adjacent to every vertex in $Y$. 
Observe that an edgeless graph is also a biclique. 
A biclique $G$ with bipartition $\langle X, Y \rangle$ is a \emph{balanced biclique} if $|X| = |Y|$. 
Observe that if a biclique is non-trivial, then it is connected. 
The following characterization of bicliques is easy to verify.

\begin{lemma}
\label{lemma:biclique-forbidden-char}
A graph $G$ is a biclique if and only if $G$ is $\{K_3,K_1+K_2\}$-free. 
\end{lemma}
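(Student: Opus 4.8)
The plan is to prove the two directions separately, with the forward direction being routine and the backward direction carrying the real content.

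For the forward direction, suppose $G$ is a biclique with bipartition $\langle X, Y\rangle$. Since $G$ is bipartite it has no odd cycle, so in particular no induced $K_3$. For the other forbidden graph, I would argue by contradiction: suppose three vertices induce a $K_1+K_2$, say with edge $ab$ and an isolated third vertex $c$. The edge $ab$ forces $a$ and $b$ into opposite parts, say $a\in X$ and $b\in Y$. Then whichever part contains $c$, completeness of the biclique forces $c$ to be adjacent to $a$ (if $c\in Y$) or to $b$ (if $c\in X$), contradicting that $c$ is non-adjacent to both. Hence $G$ is $\{K_3,K_1+K_2\}$-free.

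For the backward direction, assume $G$ is $\{K_3,K_1+K_2\}$-free. If $G$ is edgeless it is trivially a biclique, so assume it has an edge $uv$. The first key observation I would record is that $(K_1+K_2)$-freeness forces every other vertex $w$ to be adjacent to $u$ or to $v$ (otherwise $\{u,v,w\}$ would induce $K_1+K_2$); in particular $G$ is connected. The main obstacle is the next step: showing $G$ is bipartite, since $K_3$-freeness alone does not exclude longer odd cycles such as $C_5$. Here I would consider a shortest odd cycle $C$ (if one exists). A shortest odd cycle is chordless, since a chord would split it into two shorter cycles, one of which is odd; and its length is at least $5$ because length $3$ is a forbidden $K_3$. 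But then taking one edge of $C$ together with a vertex of $C$ at cyclic distance at least $2$ from both its endpoints yields an induced $K_1+K_2$ (the chordlessness guarantees the required non-adjacencies), a contradiction. Thus $G$ has no odd cycle and is bipartite with some bipartition $\langle X, Y\rangle$.

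Finally, I would establish completeness using the first observation. Fix any edge $ab$ with $a\in X$ and $b\in Y$. Every vertex lies in $N[a]\cup N[b]$; since $X$ and $Y$ are independent sets, this means every vertex of $X$ is adjacent to $b$ and every vertex of $Y$ is adjacent to $a$. To finish, take an arbitrary $x\in X$; as $G$ is connected, $x$ has a neighbour $y'\in Y$, and applying the same argument to the edge $xy'$ shows every vertex of $Y$ is adjacent to $x$. Hence every pair in $X\times Y$ is adjacent, so $G$ is a biclique, completing the proof. The only genuinely delicate point is the shortest-odd-cycle argument for bipartiteness; the completeness step is then a short consequence of the $(K_1+K_2)$-free observation together with connectivity.
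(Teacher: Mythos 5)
Your proof is correct and follows essentially the same route as the paper's: the forward direction by placing the $K_1+K_2$ vertices in the bipartition, and the backward direction via a shortest odd cycle (chordless, length at least $5$, yielding an induced $K_1+K_2$ from an edge and a distant cycle vertex, exactly the paper's $\{u_1,u_2,u_4\}$), followed by connectivity and completeness derived from $(K_1+K_2)$-freeness. Your completeness step, phrased as ``every vertex lies in $N[a]\cup N[b]$'' applied to two edges, is only a cosmetic repackaging of the paper's direct argument that a non-adjacent pair $x\in X$, $y\in Y$ together with a neighbour $z\in Y$ of $x$ induces a $K_1+K_2$.
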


\begin{proof}
$(\Rightarrow)$ If $G$ contains $K_3$ as an induced subgraph then 
$G$ is not bipartite and hence not a biclique as well.  
Suppose $G$ is a biclique with bipartition $\langle X, Y \rangle$ and 
has $K_1+K_2$ as an induced subgraph. 
Let $u,v_1,v_2$ be three vertices in $G$ such that $v_1v_2 \in E(G)$ 
and $uv_1,uv_2 \notin E(G)$. 
Without loss of generality, let $X$ be the part that contains $v_1$. 
Then, it follows that $v_2 \in Y$. 
However, $u \in X$ leads to a contradiction as $uv_2 \notin E(G)$
 and $u \in Y$ also leads to a contradiction as $uv_1 \notin E(G)$. 

$(\Leftarrow)$ Conversely,  suppose $G$ is $\{K_3,K_1+K_2\}$-free. If $G$ is not bipartite, then $G$ has an odd cycle and
let $C=(u_1,u_2,\ldots,u_r)$ denote the shortest odd cycle. 
Observe that no two non-consecutive vertices are adjacent, as $C$ is a 
shortest odd cycle. 
Then $C$ has at least $5$ vertices (as $G$ is $K_3$-free) and the subgraph 
induced on $\{u_1,u_2,u_4\}$ is $K_1+K_2$. 
Therefore,  we conclude that $G$ has no odd cycle and is hence a bipartite graph. 
Let $\langle X, Y \rangle$ denote a bipartition of $G$. 
If $G$ is edgeless, then $G$ is vacuously a biclique. 
Subsequently, let us assume that $G$ is non-trivial.  
If $G$ is disconnected, then it has at least two components, 
one of which is non-trivial. 
Then, the endpoints of an edge in such a non-trivial component 
and a vertex from another component induces a $K_1+K_2$. 
Therefore, it follows that $G$ is connected. 
Suppose there exist a pair of vertices $x \in X$, $y \in Y$ such that 
$xy \notin  E(G)$. 
Let $z \in Y$ be a neighbour of $x$.  
Such a vertex $z \in Y$ exists as $G$ is connected. 
Then, the subgraph induced on $\{x,y,z\}$ is $K_1+K_2$, which leads to a contradiction.
\end{proof}

\subsection{Edge Contractions and Graph Contractability}
The {\em contraction} of an edge $e = uv$ in $G$ results in the addition of a new vertex that is adjacent to the vertices that are adjacent to either $u$ or $v$ and the subsequent deletion of the vertices $u$ and $v$. The resulting graph is denoted by $G/e$. 
For a graph $G$ and an edge $e = uv$, we formally define $G/e$ as $V(G/e) = (V(G) \cup \{w\}) \backslash \{u, v\}$ and $E(G/e) = \{xy \mid x,y \in V(G) \setminus \{u, v\}, xy \in E(G)\} \cup \{wx \mid x \in N_G(u) \cup N_G(v)\}$ where $w$ is a new vertex. 
This process does not introduce any self-loops or parallel edges. 
%An edge contraction reduces the number of vertices in the graph by 
%exactly one, however, it may reduce the number of edges by more than one. 
For a subset $F \subseteq E(G)$, the graph $G/F$ denotes the graph 
obtained from $G$ by contracting (in an arbitrary order) all the edges in $F$.

A graph $G$ is said to be \emph{contractible} to a graph $H$ if there is a function $\psi: V(G) \rightarrow V(H)$ (and we say that $G$ is contractible to $H$ via $\psi$) such that the following properties hold.
\begin{itemize}
\item For any vertex $h \in V(H)$, $\psi^{-1}(h)$ is non-empty and connected.
\item For any two vertices $h, h’ \in V(H)$, $hh’ \in E(H)$ if and only if $E(\psi^{-1}(h), \psi^{-1}(h')) \neq \emptyset$.
\end{itemize}
%\end{definition}
\noindent For a vertex $h$ in $H$, the set $\psi^{-1}(h)$ is called a \emph{witness set} associated with or corresponding to $h$.
For a fixed $\psi$, we define the $H$-\emph{witness structure} of $G$, denoted by $\mathcal{W}$, as the collection of all witness sets.
Formally, $\mathcal{W}=\{\psi^{-1}(h) \mid h \in V(H)\}$.
Note that a witness structure $\mathcal{W}$ partitions $V(G)$.
If a witness set contains more than one vertex, then we call it a 
\emph{big} witness set; otherwise, we call it a \emph{small} witness set or \emph{singleton} witness set.
Let $F \subseteq E(G)$ be the collection of edges of some spanning tree of $G[W]$ for each witness set $W \in \mathcal{W}$. Note that any spanning tree of the graph induced on a singleton witness set is edgeless. 
Now, it is sufficient to contract edges in $F$ to obtain $H$ from $G$, i.e., $G/F = H$.
Hence, we say that $F$ is a \emph{solution} associated with the function $\psi$ and the witness structure $\mathcal{W}$.
We say $G$ is \emph{$k$-contractible} to $H$ if there exists a subset $F \subseteq E(G)$ such that $|F| \le k$ and $G/F = H$. 
Observe that in the $H$-witness structure of $G$ corresponding to $F$, there are at most $|F|$ big witness sets and the total number of vertices in
big witness sets is upper bounded by $2|F|$. %For a graph class $\calF$,  the \textsc{$\calF$-Contraction} problem takes as input a graph $G$ and an integer $k$, and the objective is to decide whether there exists a subset $F \subseteq E(G)$ of size at most $k$ such that $G/F$ is in $\calF$. 

%For a fixed graph $H$, the \textsc{$H$-Contractibility} problem takes as input a graph $G$ and the objective is to decide whether $G$ can be contracted to $H$.
%Note that $G$ is a \yes-instance of \textsc{$H$-Contractibility} if and only if $(G, |V(G)| - |V(H)| + 1)$ is a \yes-instance of \textsc{$\{H\}$-Contraction}. 
%Focusing on $\calF$ being the set of all bicliques, 
We view a biclique witness structure of $G$ as a partition of $V(G)$ into two parts with certain properties. 
For a subset $X \subseteq V(G)$, let $\rank_G(X)$ denote the number of edges in a spanning forest of $G[X]$. We drop the subscript in the notation for $\rank_G(X)$ if the graph under consideration is unambiguously clear. 

\begin{definition}[$k$-Constrained Valid Partition]
\label{def:k-contr-valid-partition}
For a graph $G$, a partition $\langle L,R \rangle$ of $V(G)$ into two parts is called a $k$-constrained valid partition if the following properties hold.
\begin{enumerate}
\item $\rank(L) + \rank(R) \le k$.
\item Every component of $G[L]$ is adjacent to every component of $G[R]$.
\end{enumerate}
\end{definition}
We have the following observation on \yes-instances of \textsc{Biclique Contraction}. 

\begin{lemma}
\label{lem:bc-part}
$(G,k)$ is a \yes-instance of \textsc{Biclique Contraction} if and only if $V(G)$ has a $k$-constrained valid partition.
\end{lemma}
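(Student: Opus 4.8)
The plan is to prove Lemma~\ref{lem:bc-part} as a biconditional, translating between a biclique witness structure of $G$ and a $k$-constrained valid partition $\langle L, R\rangle$ of $V(G)$. The key conceptual point is that contracting edges to reach a biclique is equivalent to grouping vertices into witness sets that land in the two sides of the target biclique's bipartition; the set $L$ will be the union of witness sets mapping to one side and $R$ the union of those mapping to the other side. The rank condition captures the edge budget, and the component-adjacency condition encodes the completeness requirement of a biclique.

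\medskip

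\noindent\textbf{Forward direction.} First I would assume $(G,k)$ is a \yes-instance, so there is an edge set $F$ with $|F| \le k$ such that $H := G/F$ is a biclique with bipartition $\langle X, Y \rangle$. Let $\mathcal{W}$ be the associated $H$-witness structure via some $\psi$. Define $L = \bigcup_{h \in X} \psi^{-1}(h)$ and $R = \bigcup_{h \in Y} \psi^{-1}(h)$; since $\mathcal{W}$ partitions $V(G)$, $\langle L, R\rangle$ is a partition of $V(G)$. I then need to verify the two properties. For the rank bound, note that each witness set $W$ is connected, so a spanning forest of $G[L]$ uses at most $\sum_{h \in X}(|\psi^{-1}(h)| - 1)$ edges (one fewer than the vertex count per witness set, as distinct witness sets contribute separate forest components and there are no forced extra edges); summing over $L$ and $R$ gives $\rank(L) + \rank(R) \le \sum_{W \in \mathcal{W}}(|W|-1) = |V(G)| - |V(H)| = |F|$, using the standard fact that reaching $H$ from $G$ by $|F|$ contractions drops the vertex count by exactly $|F|$. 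For the adjacency property, observe that each connected component of $G[L]$ is a union of witness sets mapping into $X$, and similarly for $R$; since $H$ is a biclique, \emph{every} vertex of $X$ is adjacent in $H$ to every vertex of $Y$, which by the witness-structure definition means $E(\psi^{-1}(h), \psi^{-1}(h')) \neq \emptyset$ for all $h \in X$, $h' \in Y$. Hence any component of $G[L]$ (containing at least one witness set of $X$) sends an edge to any component of $G[R]$, giving property~2.

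\medskip

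\noindent\textbf{Reverse direction.} Conversely, given a $k$-constrained valid partition $\langle L, R\rangle$, I would build a solution. Take a spanning forest $F_L$ of $G[L]$ and a spanning forest $F_R$ of $G[R]$, and set $F = F_L \cup F_R$, so $|F| = \rank(L) + \rank(R) \le k$. Contracting $F$ contracts each component of $G[L]$ to a single vertex and each component of $G[R]$ to a single vertex; let $H = G/F$. The vertices of $H$ split into the images of the $L$-components (call this side $X'$) and the $R$-components (side $Y'$). Since $L$ is a union of independent sets is not assumed, I must check $H$ is a biclique via Lemma~\ref{lemma:biclique-forbidden-char} or directly: there are no edges within $X'$ because each $L$-component became one vertex and there are no edges between distinct components of $G[L]$ by definition of a component, and symmetrically for $Y'$; thus $X'$ and $Y'$ are independent sets, so $H$ is bipartite with bipartition $\langle X', Y'\rangle$. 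Property~2 guarantees every vertex of $X'$ (an $L$-component) is adjacent in $H$ to every vertex of $Y'$ (an $R$-component), so $H$ is a complete bipartite graph, i.e.\ a biclique, and $(G,k)$ is a \yes-instance.

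\medskip

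\noindent\textbf{Main obstacle.} The step requiring the most care is the rank computation in the forward direction, specifically the equality $\sum_{W \in \mathcal{W}}(|W|-1) = |F|$ and the inequality relating $\rank(L) + \rank(R)$ to this sum. The subtlety is that $\rank(L)$ counts edges in a spanning forest of $G[L]$, and $G[L]$ may have fewer components than the number of witness sets comprising $L$ (two witness sets in $X$ could be adjacent in $G$), which would make $\rank(L)$ potentially \emph{larger} than $\sum_{h\in X}(|\psi^{-1}(h)|-1)$. I need to argue the correct direction of this inequality; in fact $\rank(L) = |L| - (\text{number of components of }G[L])$, and since each component is a union of at least one witness set, the number of components is at most $|X|$, giving $\rank(L) \le |L| - (\text{number of } L\text{-components}) \le$ the needed bound only if components coincide with witness sets. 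The clean resolution is to not route through the original witness structure at all for the bound, but to directly use $\rank(L)+\rank(R) = |V(G)| - c_L - c_R$ where $c_L, c_R$ are the component counts, and compare against $|F|$; I would make this precise by showing the minimum-size contraction solution corresponds exactly to contracting spanning forests, so the budget matches the rank sum, and any witness structure achieving a biclique yields $L, R$ whose component counts are at least as large as required.
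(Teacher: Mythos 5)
Your overall route coincides with the paper's: both directions translate between a biclique witness structure of $G$ and the partition $\langle L,R\rangle$ via spanning forests, and your reverse direction is complete and essentially identical to the paper's. The problem is the forward rank bound, which you correctly identify as the crux but then fail to close. The scenario you worry about --- two witness sets $\psi^{-1}(h)$, $\psi^{-1}(h')$ with $h,h'\in X$ being adjacent in $G$, so that they merge into one component of $G[L]$ and inflate $\rank(L)$ --- cannot occur, and the reason is already in the definition of contractibility: its second condition is an \emph{if and only if}, so $E(\psi^{-1}(h),\psi^{-1}(h'))\neq\emptyset$ would force $hh'\in E(H)$, contradicting that $X$ is an independent set of the bipartite graph $H$. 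Hence the components of $G[L]$ are \emph{exactly} the witness sets mapping to $X$ (and likewise for $R$), so $\rank(L)+\rank(R)=\sum_{W\in\mathcal{W}}(|W|-1)=|V(G)|-|V(H)|\le |F|\le k$. This one-line observation is what the paper implicitly uses when it asserts that contracting a spanning forest of $G[L]$ yields $X$; note your adjacency argument for property~2 is fine either way, since every component of $G[L]$ contains at least one witness set.

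Your proposed fallback does not substitute for this observation. The identity $\rank(L)+\rank(R)=|V(G)|-c_L-c_R$ is correct, but to compare it with $|F|$ you need $c_L\ge|X|$ and $c_R\ge|Y|$, which is precisely the ``witness sets do not merge'' statement you were trying to avoid proving --- asserting that the ``component counts are at least as large as required'' is circular. Likewise, the claim that minimum-size solutions correspond to contracting spanning forests says nothing about the component counts of the particular partition $\langle L,R\rangle$ you constructed from a given witness structure. So as written the forward direction has a genuine gap, though the repair is immediate from the witness-structure definition. (A minor point: your sentence ``Since $L$ is a union of independent sets is not assumed'' is garbled, but the surrounding argument --- distinct components of $G[L]$ are non-adjacent because $G[L]$ is an induced subgraph, hence the contracted side is independent --- is sound and matches the paper.)
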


\begin{proof}
$(\Rightarrow)$ 
Suppose $G$ is $k$-contractible to the biclique $H$ with bipartiton $\langle X, Y \rangle$ via $\psi$.  
Let $\mathcal{W}$ denote the $H$-\emph{witness structure} of $G$. 
Define subsets  $\mathcal{W_L}$ and $\mathcal{W_R}$ of $\mathcal{W}$ as follows: $\mathcal{W_L}=\{\psi^{-1}(h) \mid h \in V(X)\}$ and $\mathcal{W_R}=\{\psi^{-1}(h) \mid h \in V(Y)\}$. Let $L$ denote the collection of vertices in the witness sets in $\mathcal{W}_L$ and $R$ denote the collection of vertices in the witness sets in $\mathcal{W}_R$. 
It is clear that $\langle L, R \rangle$ is a partition of $V(G)$. 
Let $E_1$ be the collection of edges that are in some spanning forest of $G[L]$ and $E_2$ be the collection of edges that are in some spanning forest of $G[R]$. 
Then, contracting $E_1$ in $G[L]$ results in $X$ and contracting $E_2$ in $G[R]$ results in $Y$. Also, $E_1 \cup E_2 = E$ is a solution corresponding to $\mathcal{W}$. Then, as $|E| \leq k$, the first condition holds. 
As $H$ is a biclique, every vertex of $X$ is adjacent to every vertex of $Y$.
Therefore, for every vertex $h \in X$ and every vertex $h' \in Y$, $E(\psi^{-1}(h), \psi^{-1}(h'))$ is not empty. In other words, for every component $C_L$ of $G[L]$ and every component $C_R$ of $G[R]$, there exists vertices $x \in V(C_L)$ and $y \in V(C_R)$ such that $xy \in E(G)$. Hence, the second condition holds. 

$(\Leftarrow)$ Suppose $\langle L,R \rangle$ is a $k$-constrained valid  partition of $V(G)$. Let $E_1$ be the collection of edges of some spanning forest of $G[L]$ and $E_2$ be the collection of edges of some spanning forest of $G[R]$. 
Observe that $G[L]/ E_1 = X$ and $G[L]/ E_2 = Y$ are independent sets.  
Further, every vertex in $X$ is adjacent to every vertex in $Y$ due to the second property. 
Therefore $G/(E_1 \cup E_2) = H$ is a biclique with bipartiton 
$\langle X, Y \rangle$. 
Further, as $|E_1| = k_\ell$ and $|E_2| = k_r$ with $k_\ell + k_r \leq k$ due to the first property, it follows that
 $G$ is $k$-contractible to $H$. 
\end{proof}

Suppose $G$ is $k$-contractible to the biclique $H$ with $\mathcal{W}$ being the $H$-witness structure of $G$ and $\langle L, R \rangle$ is a $k$-constrained valid partition of $V(G)$ obtained from $\mathcal{W}$ as described in the proof of Lemma \ref{lem:bc-part}. Then, observe that the singleton witness sets in $\mathcal{W}$ correspond to the trivial components of $G[L]$ and $G[R]$ and vice-versa. A component is said to be trivial component if it has only a single vertex. We use this equivalence and the interchangeability of witness structures and constrained valid partitions throughout the paper. 

In order to prove a result on \yes-instances of \textsc{Balanced Biclique Contraction} analogous to Lemma \ref{lem:bc-part}, we introduce the following definition.

\begin{definition}[$k$-Constrained Valid Balanced Partition]
For a graph $G$, a partition $\langle L,R \rangle$ of $V(G)$ into two parts is called a $k$-constrained valid balanced partition if $\langle L,R \rangle$ is a $k$-constrained valid partition where the number of components of $G[L]$ is equal to the number of components of $G[R]$.
\end{definition}

Now, we have the following property on \yes-instances of \textsc{Balanced Biclique Contraction}. %and its proof of correctness is similar to that of Lemma \ref{lem:bc-part}. 

\begin{lemma}
\label{lem:bbc-part}
$(G,k)$ is a \yes-instance of \textsc{Balanced Biclique Contraction} if and only if $V(G)$ has a $k$-constrained valid balanced partition.
\end{lemma}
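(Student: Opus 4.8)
The plan is to prove Lemma~\ref{lem:bbc-part} by closely following the structure of the proof of Lemma~\ref{lem:bc-part}, and then checking that the single additional ``balanced'' requirement translates cleanly into the extra condition (equal number of components of $G[L]$ and $G[R]$) on both sides of the equivalence. The key fact to keep in mind is the correspondence, already made explicit in the excerpt, between witness sets of a biclique witness structure and components of the two sides of the partition: small witness sets correspond to trivial components, big witness sets to non-trivial connected sets, and in either case each witness set corresponds to exactly one component of $G[L]$ or $G[R]$. This one-to-one correspondence is what lets me equate $|X|$ with the number of components of $G[L]$ and $|Y|$ with the number of components of $G[R]$.

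For the forward direction, I would start exactly as in Lemma~\ref{lem:bc-part}: assume $(G,k)$ is a \yes-instance of \textsc{Balanced Biclique Contraction}, so $G$ is $k$-contractible to a \emph{balanced} biclique $H$ with bipartition $\langle X,Y\rangle$ satisfying $|X| = |Y|$, via some $\psi$ with witness structure $\mathcal{W}$. The construction of $L$ and $R$ from $\mathcal{W}_L = \{\psi^{-1}(h) \mid h \in X\}$ and $\mathcal{W}_R = \{\psi^{-1}(h) \mid h \in Y\}$ is identical, and Lemma~\ref{lem:bc-part} already guarantees that $\langle L,R\rangle$ is a $k$-constrained valid partition. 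The only thing I need to add is that the number of components of $G[L]$ equals the number of components of $G[R]$: since each witness set $\psi^{-1}(h)$ is connected, it forms exactly one component of $G[L]$ (respectively $G[R]$), so the number of components of $G[L]$ is $|\mathcal{W}_L| = |X|$ and that of $G[R]$ is $|\mathcal{W}_R| = |Y|$; the balance condition $|X| = |Y|$ then yields the equality of component counts.

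For the reverse direction, I would assume $\langle L,R\rangle$ is a $k$-constrained valid balanced partition and reuse the construction in Lemma~\ref{lem:bc-part} verbatim: take $E_1, E_2$ to be the edges of spanning forests of $G[L]$ and $G[R]$, contract $E_1 \cup E_2$, and obtain a biclique $H$ with bipartition $\langle X,Y\rangle$ where $X$ (respectively $Y$) consists of one vertex per component of $G[L]$ (respectively $G[R]$). Hence $|X|$ equals the number of components of $G[L]$ and $|Y|$ equals the number of components of $G[R]$, which are equal by hypothesis, so $H$ is a balanced biclique, and $|E_1 \cup E_2| \le k$ shows $G$ is $k$-contractible to it.

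The proof is essentially routine given Lemma~\ref{lem:bc-part}; the only point requiring a moment of care is the exact bookkeeping that each component of $G[L]$ or $G[R]$ corresponds to precisely one vertex of the resulting bipartite parts, i.e.\ that contracting a spanning forest of a graph collapses each component to a single vertex. I do not anticipate a genuine obstacle, but I would make sure to state this component-to-vertex correspondence explicitly, since it is the mechanism that converts the combinatorial balance condition $|X| = |Y|$ into the partition-level condition on component counts. The cleanest write-up would simply invoke Lemma~\ref{lem:bc-part} to dispatch the $k$-constrained valid partition part and then devote the argument to the equality of the number of components.
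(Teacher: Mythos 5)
Your proof is correct and follows exactly the route the paper intends: the paper states Lemma~\ref{lem:bbc-part} without any proof, treating it as an immediate extension of Lemma~\ref{lem:bc-part}, and your argument supplies precisely that extension (forward: witness sets on each side are the components, so $|X|=|Y|$ gives equal component counts; backward: contracting spanning forests collapses each component to one vertex, so equal component counts give a balanced biclique). The one detail worth making explicit is that distinct witness sets on the same side are pairwise non-adjacent in $G$ --- because $X$ and $Y$ are independent in $H$ and, by the second property of contractibility, $hh' \notin E(H)$ forces $E(\psi^{-1}(h),\psi^{-1}(h')) = \emptyset$ --- since connectedness of a witness set alone does not make it a full component of $G[L]$ or $G[R]$.
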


Observe that there is an easy brute-force algorithm that solves $K_{p,q}$-\textsc{Contraction} in $\mathcal{O}^*(2^k)$ time for each pair $p,q \geq 2$. Consider an instance $(G,k)$. If $n > k+p+q$ then $(G,k)$ is a trivial \no-instance. Otherwise, $n \leq k+p+q$. In $\mathcal{O}{(2^{k+p+q})}$ (which is $\mathcal{O}^*(2^k)$) time, we check if $V(G)$ has a $k$-constrained valid partition $\langle L,R \rangle$ with $G[L]$ having exactly $p$ components and $G[R]$ having exactly $q$ components.

We end this section by mentioning an important definition and observation. A set $Z \subseteq V(G)$ is called a {\em biclique modulator} if $G-Z$ is a biclique. If $G$ is $k$-contractible to a (balanced) biclique $H$, then there are at most $2k$ vertices that are in big witness sets of an $H$-witness structure. This leads to the following observation.

\begin{observation}
\label{obs:bc-vertex-sol}
If $G$ is $k$-contractible to a (balanced) biclique, then $G$ has a biclique modulator of size at most $2k$. 
%Further, if $G$ has more than $4k$ vertices and $G$ is $k$-contractible to a balanced biclique, then there exists a set $Z$ of at most $2k$ vertices such that $G-Z$ is a non-trivial biclique.
\end{observation}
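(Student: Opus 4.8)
The plan is to use the witness-structure perspective together with the vertex bound already recorded immediately above the statement. Suppose $G$ is $k$-contractible to a (balanced) biclique $H$, and fix a solution $F \subseteq E(G)$ with $|F| \le k$ and $G/F = H$, realized by a function $\psi : V(G) \to V(H)$. Let $\calW$ denote the associated $H$-witness structure of $G$, and let $Z$ be the union of all big witness sets of $\calW$. I would take this $Z$ as the candidate biclique modulator and argue that $|Z| \le 2k$ and that $G - Z$ is a biclique.

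For the size bound, I would directly invoke the observation recorded earlier, namely that in the $H$-witness structure corresponding to $F$ the total number of vertices lying in big witness sets is at most $2|F|$. Since $|F| \le k$, this yields $|Z| \le 2k$ with no further work.

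The substantive step is to verify that $G - Z$ is a biclique. After deleting $Z$, every remaining vertex lies in a singleton witness set, so the restriction of $\psi$ to $V(G) \setminus Z$ is a bijection onto the set $V' \subseteq V(H)$ of vertices whose witness sets are singletons. For any two such vertices $v, v'$, the contractibility condition applied to the singletons $\{v\}$ and $\{v'\}$ states precisely that $E(\psi^{-1}(\psi(v)), \psi^{-1}(\psi(v'))) \neq \emptyset$, i.e.\ $vv' \in E(G)$, if and only if $\psi(v)\psi(v') \in E(H)$. Hence $G - Z$ is isomorphic to the induced subgraph $H[V']$. Since the property of being $\{K_3, K_1 + K_2\}$-free is hereditary, Lemma~\ref{lemma:biclique-forbidden-char} shows that induced subgraphs of a biclique are again bicliques, and since $H$ is a biclique irrespective of whether it happens to be balanced, $H[V']$ is a biclique and therefore so is $G - Z$. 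This establishes that $Z$ is a biclique modulator of size at most $2k$, proving the observation for the biclique and balanced-biclique cases simultaneously.

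The only point requiring genuine care is the edge-correspondence step identifying $G - Z$ with $H[V']$; everything else follows directly from the definitions and the already-stated vertex bound. I do not anticipate a real obstacle, as the statement is essentially a repackaging of the fact that the endpoints of the contracted edges form a small vertex set whose removal leaves only untouched (singleton) witness sets, which collectively induce a biclique.
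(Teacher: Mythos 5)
Your proposal is correct and takes essentially the same route as the paper, which justifies the observation in one line by noting that the (at most $2k$) vertices in big witness sets form the modulator; your argument simply makes explicit the details the paper leaves implicit, namely that the singleton witness sets induce a subgraph of $G$ isomorphic to an induced subgraph of the target biclique $H$, which is again a biclique by heredity of $\{K_3, K_1+K_2\}$-freeness (Lemma~\ref{lemma:biclique-forbidden-char}).
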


\section{NP-Completeness Results}
\label{sec:npc}

In this section, we prove that \textsc{Biclique Contraction} and \textsc{Balanced Biclique Contraction} are \NP-complete\ even when restricted to bipartite graphs. 

\subsection{Biclique Contraction}
\label{subsec:np-hard-biclique}

A set $X$ is said to {\em dominate} a set $Y$ if $Y \subseteq N(X)$. 
We show the \NP-hardness of {\sc Biclique Contraction} by giving a polynomial-time reduction from \textsc{Red-Blue Dominating Set}.   
In the \textsc{Red-Blue Dominating Set} problem, given a bipartite graph $G$ with bipartition $\langle R, B \rangle$ and an integer $\kappa$, the objective is to find a set $S \subseteq R$ of size at most $\kappa$ that dominates $B$.
It is well-known and easy to verify that \textsc{Red-Blue Dominating Set} is equivalent to \textsc{Set Cover} \cite[Problem SP5]{GareyJ79} and is therefore \NP-hard \cite{DomLS14}.
%\rs{do we want to define Set cover in the footnote?}

\begin{lemma}
\label{lem:RBDS-to-biclique-contraction}
%\label{thm:BCC-NP-H}
%{\sc Biclique Contraction} is \NP-complete even on bipartite graphs.
There is a polynomial-time reduction that takes as input an instance $(G, R, B, \kappa)$ of {\sc Red Blue Dominating Set} and returns an equivalent instance $(H, k)$ of {\sc Biclique Contraction} such that $H$ is bipartite, $|V(H)| = \calO(|V(G)|)$ and $k = |B| + \kappa$. 
\end{lemma}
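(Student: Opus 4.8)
The plan is to invoke the partition characterization of Lemma~\ref{lem:bc-part} and engineer $H$ so that a small red--blue dominating set corresponds exactly to a cheap $k$-constrained valid partition of $V(H)$. After a trivial preprocessing step I will assume that every vertex of $B$ has a neighbour in $R$ (otherwise we output a fixed \no-instance) and that $\kappa < |R|$ (otherwise the instance is decided in polynomial time by testing $R$ against $B$). The graph $H$ is then built from $G$ by: (i) adding one new vertex $b^\star$ adjacent to every vertex of $R$; (ii) attaching a set $\{p_1,\dots,p_P\}$ of $P$ pendant vertices to $b^\star$, where $P$ is a suitable polynomial in $|V(G)|$ with $P>2k$; and (iii) attaching to each $b\in B$ a private pendant $q_b$. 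One checks immediately that $H$ is bipartite with bipartition $\langle R\cup\{p_i\}\cup\{q_b\},\,B\cup\{b^\star\}\rangle$, that $H$ is connected, and that $|V(H)|=\calO(|V(G)|)$; the output budget is $k=|B|+\kappa$.

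For the forward direction, given $S\subseteq R$ with $|S|\le\kappa$ dominating $B$, I will exhibit the partition $L=\{b^\star\}\cup S\cup B$ and $M=V(H)\setminus L$. Since $b^\star$ is adjacent to all of $S$ and every $b\in B$ has a neighbour in $S$, the graph $H[L]$ is connected, so $\rank(L)=|L|-1=|S|+|B|\le k$, while every leftover red vertex and every $p_i$ and $q_b$ is isolated in $H[M]$, giving $\rank(M)=0$. Each such singleton is adjacent either to $b^\star\in L$ or to its blue parent in $L$, so the single component of $H[L]$ is adjacent to every component of $H[M]$; by Definition~\ref{def:k-contr-valid-partition} this is a $k$-constrained valid partition, and Lemma~\ref{lem:bc-part} yields a \yes-instance.

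The converse is where the real work lies and where I expect the main obstacle. Starting from a $k$-constrained valid partition $\langle L,M\rangle$ and using the symmetry of the two parts to assume $b^\star\in L$, the first step is to \emph{force $H[L]$ to be connected}: since at most $2k$ vertices can lie in non-trivial (big) witness sets and $P>2k$, some pendant $p_i$ is a singleton, and being adjacent only to $b^\star$ it must lie in $M$; the biclique condition then forces every component of $H[L]$ to contain $b^\star$, so $H[L]$ has a single component. The private pendants are the second ingredient: if $b\in M$ then $q_b\in M$ as well (otherwise $q_b$ would be an isolated component of $H[L]$, contradicting connectivity), so every blue vertex placed in $M$ is dragged into a \emph{non-trivial} component that also contains its pendant, and since $R$ is independent every non-trivial component of $H[M]$ contains at least one blue vertex. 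I then read off the dominating set $S^\star$ as the red vertices in $L$ together with the red vertices in non-trivial components of $H[M]$; connectivity of $H[L]$ shows each blue vertex in $L$ has a red neighbour in $S^\star$, and a short per-component argument gives the same for blue vertices in $M$, so $S^\star$ dominates $B$.

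The final and most delicate step is the counting that yields $|S^\star|\le\kappa$. Because each blue vertex in $M$ brings along exactly its pendant, a non-trivial component $C$ of $H[M]$ has exactly $|C\cap R|+2|C\cap B|$ vertices and contributes $|C\cap R|+2|C\cap B|-1$ to $\rank(M)$; combined with the lower bound $\rank(L)\ge |L\cap R|+|L\cap B|$ and the identity $|L\cap B|+|B_{\mathrm{ns}}|=|B|$ (no blue vertex is a singleton of $H[M]$), the inequality $\rank(L)+\rank(M)\le k=|B|+\kappa$ rearranges to $|S^\star|\le \kappa-|B_{\mathrm{ns}}|+\#\{\text{non-trivial components of }H[M]\}$, where $B_{\mathrm{ns}}$ denotes the blue vertices in non-trivial components. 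This is at most $\kappa$ precisely because every non-trivial component carries at least one blue vertex. Ruling out ``cheap'' blue singletons in $M$ and guaranteeing that the $|B|$ portion of the budget is genuinely consumed by the blue vertices is exactly what the two families of pendants are designed to enforce, and verifying this bound carefully is the crux of the argument.
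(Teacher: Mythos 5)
Your construction is the paper's construction in all but notation: your $b^\star$ is its apex vertex $x$, your pendants $p_1,\dots,p_P$ play the role of its set $C$ (the paper uses exactly $k+1$ such vertices and argues via the rank of a spanning forest rather than via the $2k$ bound on vertices in big witness sets), and your private pendants $q_b$ are its set $B'$. The forward direction is likewise identical. Where you genuinely diverge is the converse. The paper \emph{normalizes} the partition through a sequence of exchange arguments: it first shows $H[X]$ is connected, then moves $C$ entirely to $Y$, then moves blue vertices (together with red vertices of their components, when needed) into $X$ and the pendants $B'$ back to $Y$, arriving at a canonical partition with $B\cup\{x\}\subseteq X$ and $B'\subseteq Y$, from which $R\cap X$ is read off with a one-line count. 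You instead leave the partition $\langle L,M\rangle$ untouched after establishing connectivity of $H[L]$, observe that every blue vertex in $M$ is dragged into a non-trivial component by its pendant, take $S^\star$ to be the red vertices of $L$ plus the red vertices of non-trivial components of $H[M]$, and close with the accounting identity $\rank(M)=|R_{\mathrm{ns}}|+2|B_{\mathrm{ns}}|-t$ (with $t$ non-trivial components, each containing a blue vertex, so $t\le|B_{\mathrm{ns}}|$), which together with $\rank(L)\ge|L\cap R|+|L\cap B|$ and $|L\cap B|+|B_{\mathrm{ns}}|=|B|$ yields $|S^\star|\le\kappa$; I verified this rearrangement and it is correct, as is the two-case domination argument (a red neighbour inside a non-trivial component via connectivity, or a red neighbour in $L$ forced by the adjacency condition of Definition~\ref{def:k-contr-valid-partition} when $C=\{b,q_b\}$). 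Your route buys robustness: it avoids the somewhat delicate exchange steps and does not need the paper's extra assumption that every blue vertex has at least two red neighbours; the paper's route buys a shorter final count at the price of those normalization lemmas. One small repair: do not say $P$ is ``a suitable polynomial'' --- the lemma promises $|V(H)|=\calO(|V(G)|)$, so fix $P=2k+1$, which is linear because your preprocessing guarantees $\kappa<|R|$ and hence $k=|B|+\kappa<|V(G)|$.
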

\begin{proof}
%We show \NP-hardness by giving a polynomial reduction that takes as input an instance $(G, R, B, \kappa)$ of {\sc Red Blue Dominating Set} and returns an equivalent instance $(H, k)$ of {\sc Biclique Contraction} such that $|V(H)| = \calO(|V(G)|)$ and $k = |B| + \kappa$.
Consider an instance $(G, R, B, \kappa)$ of \textsc{Red-Blue Dominating Set}.
Without loss of generality, assume $|R|,|B| >\kappa$ and that every vertex $b \in B$ is adjacent to at least two vertices in $R$. 
We construct an instance $(H,k)$ of {\sc Biclique Contraction} where $k=\kappa+|B|$ and $H$ is obtained from $G$ as follows.
%\begin{itemize}
%\item For every vertex $b \in B$, add a new vertex $b'$ adjacent to $b$.
%Let $B'$ denote the set $\{b' \mid b \in B\}$.
%\item Add a new vertex $x$ adjacent to every vertex in $R$. Add $\kappa+|B|+1$ new vertices $v_1, \ldots, v_{\kappa+|B|+1}$ adjacent to $x$.
%Let $C$ denote the set $\{v_1, \ldots, v_{\kappa+|B|+1}\}$.%\end{itemize}

\begin{figure}[t]
\begin{center}
\includegraphics[scale=0.2]{./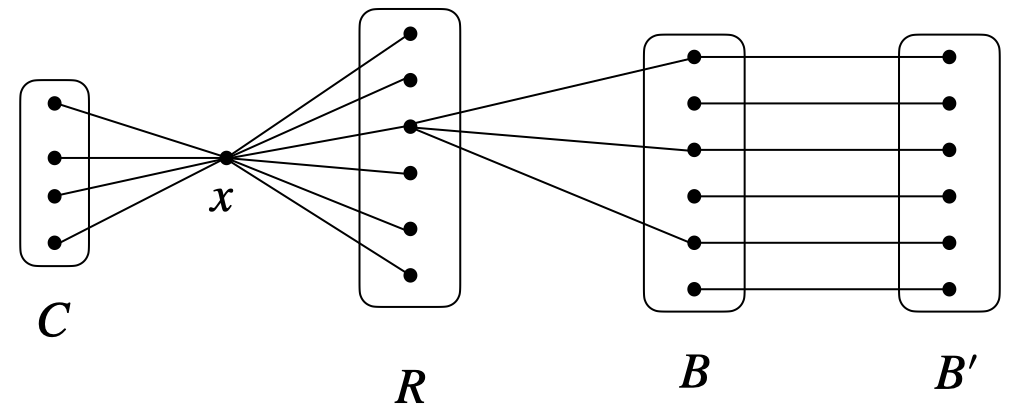}%, trim={2cm 21cm 15cm 2cm},clip
\caption{The graph $H$ in the reduction from \textsc{Red Blue Dominating Set} 
to \textsc{Biclique Contraction} where edges between $R$ and $B$ are the same as in $G$. 
The vertex $x$ is adjacent to every vertex in $R \cup C$, and each vertex in $B$ is adjacent to exactly one vertex in $B'$.
\label{fig:np-hard-biclique}}
\end{center}
\end{figure}
\begin{itemize}
\item For every vertex $b \in B$, add a new vertex $b'$ adjacent to $b$.
Let $B'$ denote the set $\{b' \mid b \in B\}$.
\item Add a new vertex $x$ adjacent to every vertex in $R$ and add $\kappa+|B|+1$ new vertices $v_1, \ldots, v_{\kappa+|B|+1}$ adjacent to $x$.
Let $C$ denote the set $\{v_1, \ldots, v_{\kappa+|B|+1}\}$.
\end{itemize}
See Figure \ref{fig:np-hard-biclique} for an illustration. 

It is easy to verify that the reduction takes polynomial time. Further, 
 $H$ is connected and bipartite with bipartition $\langle B \cup \{x\}, R \cup C \cup B' \rangle$. 
 We show that $(G, R, B, \kappa)$ is a \yes-instance of \textsc{Red-Blue Dominating Set} if and only if $(H,k)$ is a \yes-instance of {\sc Biclique Contraction} where $k=\kappa+|B|$.

$(\Rightarrow)$
Let $(G, R, B, \kappa)$ is a \yes-instance of \textsc{Red-Blue Dominating Set} with set $S \subseteq R$ and $|S| \leq \kappa$ such that $S$ dominates $B$.
Then, consider the partition of $V(H)$ into sets $X=S \cup B \cup \{x\}$ and $Y=C \cup B' \cup (R \setminus S)$. 
Observe that $H[X]$ is connected as $S$ dominates $B$ and $S \subseteq N(x)$. Therefore, any spanning forest, which is also a spanning tree, 
of $H[X]$ has at most $\kappa+|B|$ edges. Further, $Y$ is an independent set and since $H$ is connected, 
every vertex in $Y$ has a neighbour in $X$. Therefore, $H/F$ is a biclique where $F$ is the  set of edges of some spanning tree of $H[X]$. 

$(\Leftarrow)$
Conversely, suppose $H$ is $k$-contractible to a biclique. Let $\langle X, Y \rangle$ be a $k$-constrained valid partition of 
$V(H)$ given by Lemma~\ref{lem:bc-part}. Without loss of generality, let $x \in X$.
We first claim that at least one vertex from $C$ is in $Y$. If $C \subseteq X$, then as $C \subseteq N(x)$ and $|C|=k+1$, 
any spanning forest of $H[X]$ has at least $k+1$ edges, leading to a contradiction. Let $c \in C$ be a vertex in $Y$.
Next, we claim that $H[X]$ is connected. Suppose $H[X]$ has a component $\widehat{H}$ not containing $x$.
Then, $\widehat{H}$ has no vertex from $C$ as $C \subseteq N(x)$ and no vertex from $R$ as $R \subseteq N(x)$. 
It follows that $V(\widehat{H}) \subseteq B \cup B'$.  Then,  the component of $H[Y]$ containing
$c$ has no other vertex since $N(c)=\{x\}$. This implies that no vertex in the component of $H[Y]$ containing 
$c$ is adjacent to a vertex in $\widehat{H},$ leading to a contradiction.
Now, we argue that $\langle X, Y \rangle$ can be transformed into another $k$-constrained valid partition of $V(H)$ with $C \subseteq Y$. If there is a vertex $d \in X \cap C$, then as $N(d)=\{x\}$, $H[X \setminus \{d\}]$ is connected and every component of $H[Y \cup \{d\}]$ (including the trivial component in which $d$ is in) is adjacent to $H[X \setminus \{d\}]$, $\langle X \setminus C, Y \cup C \rangle$ is the required $k$-constrained valid partition of $V(H)$. 

Now, we transform $\langle X, Y \rangle$ into another $k$-constrained valid partition 
of $V(H)$ such that $B \subseteq X$ and $B' \subseteq Y$. 
Suppose there is a vertex $b \in B \cap Y$.
Let $\widehat{H}$ be the component of $H[Y]$ containing $b$.
Then, as the only neighbour of $b'$ is $b$ and $x\in X$, it follows that $b' \in V(\widehat{H})$.
As $H[X]$ is adjacent to $\widehat{H}$, there is a vertex $r \in R$ that is adjacent to $b$ such that $r \in V(\widehat{H})$ or $r \in X$. 
In the former case, we move $B \cap V(\widehat{H})$ and $R \cap V(\widehat{H})$ to $X$ and in the latter case, we move $B \cap V(\widehat{H})$ to $X$.
In both cases, $X$ remains connected, and it is easy to verify that the resulting partition is a $k$-constrained valid partition of $V(H)$. 

Subsequently, we may assume $B \subseteq X$. 
As $H[X] \setminus B'$ is also connected and 
$N(B') \subseteq B$, by moving vertices from $X \cap B'$ to $Y$, 
we get another partition of $V(H)$ that is a $k$-constrained valid partition. 
Once we achieve $B \subseteq X$, we may safely move vertices of $B'$ 
from $X$ (if any) to $Y$. 
Hence, we may now assume $B \subseteq X$ and $B' \subseteq Y$. 

At this point, we have $B \cup \{x\} \subseteq X$ and $B' \cap X=\emptyset$.
Also, as $N(x) \cap B=\emptyset$, it follows that for each vertex $b \in B$, there is a vertex $r \in R \cap X$ in order for $H[X]$ to be connected. 
As any spanning tree of $H[X]$ has at most $k=\kappa+|B|$ edges, it follows that the set $R \cap X$ has at most $\kappa$ vertices.
Equivalently, $R \cap X$ is a set of at most $\kappa$ vertices that dominates $B$.
\end{proof}

It is easy to verify that {\sc Biclique Contraction} is in \NP.
This fact, along with Lemma \ref{lem:RBDS-to-biclique-contraction} establishes the first part of Theorem \ref{thm:np-hardness}.

\subsection{Balanced Biclique Contraction}
\label{subsec:np-hard-bal-biclique}
We show the \NP-hardness of \textsc{Balanced Biclique Contraction} by a reduction from \textsc{Hypergraph $2$-Coloring}. 
A {\em hypergraph} $\calG$ is a pair $(V,\mathcal{S})$ where $V$ is a finite set of vertices (denoted as $V(\calG)$) and $\calS \subseteq 2^{V}$ is a finite collection of subsets of $V$ called {\em hyperedges}. % with a hyperedge being a collection of vertices. 
In the \textsc{Hypergraph $2$-Coloring} problem, the input is a hypergraph $\calG$ and the objective is to determine if there is a $2$-coloring $\phi: V(\calG) \mapsto \{1, 2\}$ such that no hyperedge is monochromatic, i.e., a $2$-coloring in which every hyperedge has a vertex with color 1 and a vertex with color 2. 
\textsc{Hypergraph $2$-Coloring} is known to be \NP-complete~\cite[Problem SP4]{GareyJ79} and is one of the natural choices picked to show the \NP-hardness of contraction problems in the literature \cite{BrouwerV87,DabrowskiP17-IPL,FialaKP13,KrithikaST22}.

% (see for eg. \cite{BrouwerV87} in the proof of \NP-hardness of $P_4$-\textsc{Contractiblity} and $C_4$-\textsc{Contractiblity}).
%Brouwer and Veldman used this problem as a starting point to prove that $P_4$-\textsc{Contractiblity} and $C_4$-\textsc{Contractiblity} are \NP-\complete~\cite{BrouwerV87}.

\begin{lemma}
\label{lem:np-hard-bal-biclique-correctness}
There is a polynomial-time reduction that takes as input an instance $(\calG)$
of {\sc Hypergraph $2$-Coloring}  and returns an equivalent instance 
$(G, \kappa)$ of {\sc Balanced Biclique Contraction} such that 
$G$ is bipartite, $|V(G)| = \mathcal{O}(|V(\calG)|)$ and $\kappa$ is a function of the number of vertices and the number of hyperedges in $\calG$. 
\end{lemma}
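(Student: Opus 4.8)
The plan is to argue correctness entirely through Lemma~\ref{lem:bbc-part}, so that it suffices to construct a bipartite graph $G$ and an integer $\kappa = f(n,m)$ (with $n = |V(\calG)|$, $m = |\calS|$) for which $V(G)$ admits a $\kappa$-constrained valid balanced partition if and only if $\calG$ has a proper $2$-coloring. The guiding dictionary is that a coloring $\phi\colon V(\calG)\to\{1,2\}$ corresponds to the contraction bipartition $\langle L,R\rangle$, with the vertices colored $1$ ending up (after merging) on the $L$-side and those colored $2$ on the $R$-side. To make the rank bookkeeping transparent I will use the identity $\rank(L)+\rank(R)=|V(G)|-c_L-c_R$, where $c_L,c_R$ denote the numbers of components of $G[L]$ and $G[R]$: with this, Property~1 of Definition~\ref{def:k-contr-valid-partition} reads $c_L+c_R\ge |V(G)|-\kappa$ and the balance condition reads $c_L=c_R$, so choosing $\kappa$ tightly pins down a target number of components on each side. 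Concretely I would place one vertex per $u\in V(\calG)$, add a symmetric pair of hub vertices destined for the two sides together with a small forcing gadget (a handful of degree-restricted vertices) that pins each hub to its side and pre-installs exactly the cross-component adjacencies demanded by Property~2, and then wire each hyperedge to this gadget so that the \emph{only} way a hyperedge can destroy the adjacency (or inflate the rank past $\kappa$) is if all its vertices are pushed to the same side.

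For the forward direction I would start from a proper $2$-coloring $\phi$, form $\langle L,R\rangle$ by sending each $u$ to the side dictated by $\phi(u)$ and distributing the hub/gadget vertices in their intended way, and then simply verify the three requirements: the component counts on the two sides coincide by the symmetry of the construction, Property~2 holds because every hyperedge, being bichromatic, supplies the witnessing edge between the relevant $L$- and $R$-components, and $\rank(L)+\rank(R)$ equals the pre-computed value $\kappa$. This direction should be a routine check once the gadget is fixed.

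The reverse direction is where the real work lies. Given a $\kappa$-constrained valid balanced partition $\langle L,R\rangle$, I would first run a sequence of exchange/normalization arguments to bring it into a canonical shape without increasing $\rank(L)+\rank(R)$ or breaking the balance $c_L=c_R$: the degree-restricted forcing vertices should pin the two hubs to opposite sides and force the hub components to absorb the bulk of each side, while the tight budget forces essentially every remaining vertex to sit in a small (ideally singleton) component. Once the partition is in canonical form, I would define $\phi$ by reading off the side of each $u$, and argue that if some hyperedge were monochromatic then the component it induces could not meet every opposite component (violating Property~2), or its repair would cost more than $\kappa$ contractions, contradicting validity; hence $\phi$ is proper. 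Throughout I must also confirm that $G$ is bipartite, which I would guarantee by exhibiting an explicit bipartition of the whole construction.

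The main obstacle I anticipate is the reverse-direction normalization, specifically reconciling the rigidity of Property~2 with the balance condition. Because ``every component of $G[L]$ is adjacent to every component of $G[R]$'' behaves like a completeness requirement on the component-adjacency bipartite graph, a partition with several components on each side is extremely constrained; the delicate point is to design the forcing gadget and choose $\kappa$ so that, simultaneously, (i) proper colorings realize the exact target component counts on both sides, (ii) the balance $c_L=c_R$ is forced rather than assumed, and (iii) no illegitimate balanced partition sneaks under the budget. Getting this interplay right—rather than the individual directions of the equivalence—is what I expect to demand the most care.
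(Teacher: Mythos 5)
Your high-level dictionary (coloring classes $\leftrightarrow$ sides of a $\kappa$-constrained valid balanced partition, correctness via Lemma~\ref{lem:bbc-part}, the rank identity $\rank(L)+\rank(R)=|V(G)|-c_L-c_R$) matches the paper, but the central mechanism you propose does not work, and you have correctly located where it breaks without supplying the fix. Since $\kappa$ must grow with $N=|V(\calG)|$ and $M=|\calS|$ (in the paper $\kappa=2M+N-2$, later enlarged), a ``symmetric pair of hub vertices'' plus ``a handful of degree-restricted'' forcing vertices cannot pin anything: any constant-size gadget can be absorbed into big witness sets at constant contraction cost, well under budget, so none of your reverse-direction normalization steps are actually forced. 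The paper's forcing is quantitative, not local: it installs anchor sets $L,R$ of size $3k+1$ each (strictly larger than the budget), joined as a biclique; in any solution at most $2k$ vertices lie in big witness sets, so at least $k+1$ vertices of each anchor are singleton witness sets, and a star-counting argument (a spanning tree of $H[L'\cup\{r\}]$ already has $k+1$ edges) forces $L\subseteq L^*$, $R\subseteq R^*$ with every anchor vertex a trivial component. Moreover you need $\Theta(M)$ detector vertices, one pair $s^{\ell}_j,s^{r}_j$ \emph{per hyperedge} adjacent to the vertices of $S_j$ and to the opposite anchors, so that connectivity of $H[S^{\ell}\cup V_1]$ and $H[S^{r}\cup V_2]$ forces every hyperedge to meet both color classes; a single hub pair cannot detect monochromaticity of individual hyperedges. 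Your worries (ii)--(iii) about forcing balance and exact component counts are resolved in the paper precisely by this counting (each side has exactly $3k+2$ components: the $3k+1$ anchor singletons plus one glued component, using the harmless assumptions $\emptyset\notin\calS$ and a full hyperedge $S_M=V(\calG)$), but only because the anchors are large relative to $\kappa$.

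The second genuine gap is bipartiteness, which the lemma explicitly requires and which you defer to ``exhibiting an explicit bipartition.'' No such bipartition exists for the natural wiring: with per-hyperedge detectors on both sides, one gets odd cycles of the form $v\,s^{\ell}_j\,r\,\ell\,s^{r}_j\,v$. The paper's construction is indeed non-bipartite at first (the intermediate graph $H$), and repairing this is a substantive extra step, not bookkeeping: subdivide all edges of $E_H(S^{\ell},V(\calG))$, raise the budget to $k+|Z|$ where $Z$ is the set of subdivision vertices, and prove $(H,k)$ equivalent to $(G,k+|Z|)$ using the fact that every degree-$2$ subdivision vertex must lie in a big witness set of any $K_{p,p}$-witness structure (since $p\ge 5$), so one incident solution edge per vertex of $Z$ can be extracted to recover $H$ and a size-$k$ solution for it. As written, your proposal neither anticipates the odd cycles nor provides a mechanism of this kind, so both the forcing and the bipartiteness requirements of the statement remain unproved.
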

\begin{proof}
Consider an instance $(\calG=(V,\mathcal{S}))$ of {\sc Hypergraph $2$-Coloring} where $V(\calG) = \{v_1, v_2, \dots, v_N\}$ for some $N \ge 1$ and $\calS = \{S_1, S_2, \dots, S_M\}$ for some $M \ge 1$. 
Without loss of generality, assume that $\emptyset \not\in \calS$, every hyperedge contains at least two vertices and there exists an hyperedge, say $S_M$, that contains all the vertices in $V(\calG)$. 
The reduction first constructs an intermediate non-bipartite graph $H$ using the following procedure. 
\begin{itemize}
\item For every vertex $v$ in $V(\calG)$, add a vertex $v$ in $H$. 
\item For every hyperedge $S_j$ in $\calS$, add two vertices $s^{\ell}_j$ and $s^{r}_j$. Let $S^{\ell} = \{s^{\ell}_j \mid j \in [M]\}$ and $S^r = \{s^{r}_j \mid j \in [M]\}$.
\item For every $i \in [N]$ and $j \in [M]$ such that $v_i \in S_j$ add edges $v_i s^{\ell}_j$ and $v_is^{r}_j$.
\item Add vertices $L = \{\ell_1, \dots, \ell_{6M+3N-5}\}$ and $R = \{r_1, \dots, r_{6M+3N-5}\}$ such that every vertex in $L$ is adjacent to every vertex in $R$. 
\item Make every vertex in $L$ adjacent to every vertex in $S^{r}$ and every vertex in $R$ adjacent to every vertex in $S^{\ell}$.
\end{itemize}
This completes the construction of $H$.
See Figure~\ref{fig:np-hard-bal-biclique} for an illustration. %Observe that $H$ has $14M+7N-10$ vertices. 
We show that $(\calG)$ is a \yes-instance of {\sc Hypergraph $2$-Coloring} if and only if $(H, k=2M+N-2)$ is a \yes-instance of \textsc{Balanced Biclique Contraction}. The key idea behind the reduction is as follows. First, the number of vertices in $L$ and $R$ is so large that no edge in $E_H(L, R)$ can be contracted without exceeding the budget $k$. 
Hence, $H[L \cup R]$ acts as a representative subgraph of the final balanced biclique. Second, any contraction solution needs to partition $V(\calG)$ into two parts $V_1$ and
$V_2$ such that both $H[S^{\ell} \cup V_1]$ and $H[S^{r} \cup V_2]$ are connected. 
This is equivalent to finding a $2$-coloring of  $\calG$ such that
no hyperedge is monochromatic. 

\begin{figure}[t]
%\begin{center}
  \begin{minipage}[c]{0.60\textwidth}
  \includegraphics[scale=0.25]{./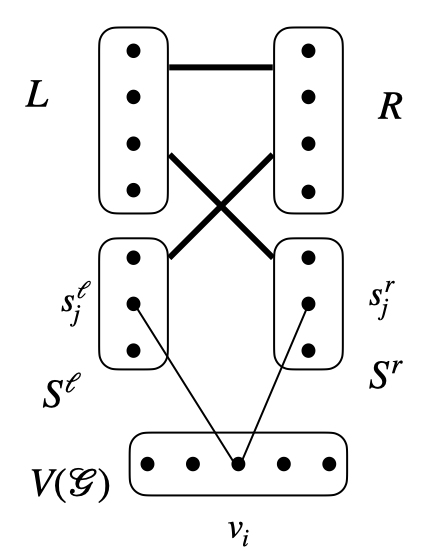}
  %\includegraphics[scale=0.45]{./images/HG-BBCC.pdf}
  %, trim=0cm 16cm 0cm 10cm
  \end{minipage}\hfill
  \begin{minipage}[c]{0.40\textwidth}
    \caption{Construction of graph $H$ in the reduction from \textsc{Hypergraph 2-Coloring} to \textsc{Balanced Biclique Contraction} where thick lines between two sets indicate bicliques. 
    \label{fig:np-hard-bal-biclique}}
%\end{center}
  \end{minipage}
\end{figure}

$(\Rightarrow)$ Suppose $\phi: V(\calG) \mapsto \{1, 2\}$ is a 
$2$-coloring such that for no hyperedge in $\calS$ is monochromatic. 
Using this coloring, we define a partition $\langle X,Y \rangle$ of $V(H)$
with $X= L \cup  S^{\ell} \cup V_1$ and $Y= R \cup  S^{r} \cup V_2$ 
where $V_1= \{v \mid  v\in V(\calG),  \phi(v) = 1 \}$  and 
$V_2= \{v \mid v \in V(\calG), \phi(v) = 2 \}$. 
As $\emptyset \not\in \calS$, ${S_M} = V(\calG)$, 
and no edge in $\mathcal{S}$ is monochromatic, 
$H[S^{\ell} \cup V_1]$ and $H[S^{r} \cup V_2]$ are connected. 
Moreover, $H[S^{\ell} \cup V_1]$ has a vertex 
adjacent to a vertex in $H[S^r \cup V_2]$. 
Further, as $L \cup S^{\ell}$, $L \cup V(\calG)$, $R \cup S^{r}$ and
$R \cup V(\calG)$ are independent sets in $H$, it follows that $H[X]$ 
has a spanning forest of size $M+|V_1|-1$ and $H[Y]$ has a spanning
forest of size $M+|V_2|-1$. 
Therefore, $\langle X,Y \rangle$ is a $k$-constrained valid balanced partition 
of $V(H)$ where $k = 2M + N - 2$.
From Lemma~\ref{lem:bbc-part}, we conclude that $(H, k)$ 
is a \yes-instance of \textsc{Balanced Biclique Contraction}.

$(\Leftarrow)$ Conversely, suppose $H$ is $k$-contractible to a balanced 
biclique. Recall that an edge contraction reduces the number of vertices
by exactly one. 
Then, $H$ is $k$-contractible to the balanced biclique $K_{q,q}$ 
with $q \ge 6M+3N-4=3k+2$. 
As in any $K_{q,q}$-witness structure of $H$ there are at most $2k$ 
vertices in big witness sets and $|L|=|R|=3k+1$,  there exist subsets 
$L' \subseteq L$, $R' \subseteq R$ with $|L'|,|R'| \geq k+1$ 
such that every vertex of $L' \cup R'$ is in a small witness set. 
Now let $\langle L^{*}, R^{*} \rangle$ be a $k$-constrained valid 
balanced partition of $V(H)$ given by Lemma~\ref{lem:bbc-part} 
corresponding to this witness structure. 
Then, $H[L^*]$ and $H[R^*]$ each have atleast $3k+2$ components. 
Further,  every vertex in $L'$ is in a trivial component of $H[L^*]$ and
every vertex in $R'$ is in a trivial component of $H[R^*]$. 

Next, we claim that  $R \subset R^{*}$ and $L \subset L^{*}$. 
If there exist a vertex $r \in R \cap L^{*}$, then any spanning tree of 
$H[L' \cup \{r\}]$ has at least $k+1$ edges leading to a contradiction. 
A similar argument shows that $L \subset L^{*}$. 
Now we argue that every vertex of $L$ is in a trivial component of 
$H[L^*]$ and every vertex of $R$ is in a trivial component of $H[R^*]$. 
Suppose there are two vertices $\ell_i, \ell_j \in L$ that are in 
the same component $X$ of $H[L^*]$.
As $\ell_i$ and $\ell_j$ are not adjacent, there is at least one more 
vertex $x$ in $X$ that is adjacent to $\ell_i$ or $\ell_j$. 
By the construction of $H$, $x \in R \cup S^r$. 
However, as $R \subset R^{*}$, we have $x \in S^r$. 
Then, any spanning tree of $H[\{x\} \cup L']$ has at least 
$k+1$ edges leading to a contradiction. 
Therefore, every vertex of $L$ is a trivial component of $H[L^*]$. 
A similar argument shows that every vertex of $R$ is in a trivial 
component of $H[R^*]$. This shows that vertices of $L$ form $3k+1$ components of $H[L^*]$ and vertices of $R$ form $3k+1$ components of $H[R^*]$. 

Further, as every vertex in $S^r$ is adjacent to every vertex in $L$ 
and every vertex in $S^\ell$ is adjacent to every vertex in $R$, 
we have ${S^r} \subset R^{*}$ and ${S^l} \subset L^{*}$. 
Moreover, as $L \cup V(\calG)$ and $R \cup V(\calG)$ are 
independent sets, we conclude that no vertex from $V(\calG)$ 
is in a trivial component of $H[L^{*}]$ or $H[R^{*}]$. 
Observe that ${S^\ell}$ is contained in one component of 
$H[L^{*}]$ if and only if ${S^r}$ is contained in one component 
of $H[R^{*}]$ due to Lemma~\ref{lem:bbc-part}. 
Let $\langle V_1, V_2 \rangle$ be the partition of $V(\calG)$ 
such that $V_1 = L^* \cap V(\calG)$ and $V_2= R^* \cap V(\calG)$. 
Without loss of generality, let $V_1 \neq \emptyset$. 
Suppose there are two components of $H[L^{*}]$ that contain 
vertices from $S^\ell$. 
Let $C'$ be one of these components that contains $s^{\ell}_M$. 
Then, $V_1$ is contained in $C'$. 
Let $C$ denote a component of $H[L^{*}]$ different from $C'$ 
that contains vertices from $S^\ell$. 
As $\emptyset \notin \mathcal{S}$, it follows that $C$ has a vertex 
that is adjacent to a vertex in $C'$, leading to a contradiction. 
Thus, $H[L^{*}]$ and $H[R^{*}]$ each have exactly $3k+2$ 
components implying that $H[{S^\ell} \cup V_1]$ and $H[{S^r} \cup V_2]$ 
are connected. 
Then, $\langle V_1, V_2 \rangle$ gives a $2$-Coloring for $\calG$ such that
 no edge is monochromatic. Hence, $(\calG)$ is a \yes-instance of {\sc Hypergraph $2$-Coloring}. 

Now, let $G$ be the graph obtained from $H$ by subdividing every edge
in $E_H(S^\ell,V(\calG))$. 
Clearly, $G$ is bipartite with bipartition 
$\langle L \cup S^\ell \cup V(\calG), R \cup S^r \cup Z \rangle$ 
where $Z$ is the set of new vertices added during the subdivision. 
Observe that $|Z|=\sum_{v \in V(\calG)} d(v) \geq 2M$ where $d(v)$ 
denotes the number of hyperedges that contain $v$. 
Then, $(H, k)$ is a \yes-instance of \textsc{Balanced Biclique Contraction}
if and only if $(G, k+|Z|)$ is a \yes-instance of \textsc{Balanced Biclique Contraction}. 

Suppose $H$ is $k$-contractible to the balanced biclique $K_{q,q}$. 
Then, from earlier arguments, we know that $q=3k+2$. 
Let $\langle L^{*}, R^{*} \rangle$ be a $k$-constrained valid balanced 
partition of $V(H)$  given by Lemma~\ref{lem:bbc-part}. 
Then, $\langle L^{*} \cup Z, R^{*} \rangle$ is a $(k+|Z|)$-constrained 
valid balanced partition of $V(G)$. 
By Lemma~\ref{lem:bbc-part}, $G$ is $(k+|Z|)$-contractible to a balanced biclique. 
Conversely, suppose $G$ is $(k+|Z|)$-contractible to the balanced 
biclique $K_{p,p}$. 
Let $F \subseteq E(G)$ be the set of edges of $G$ such that $G/F=K_{p,p}$. 
Then, as $|V(G)|=14M+7N-10+|Z|$, 
it follows that $2p \geq 12M+6N-8$ and so $p \geq 5$. 
As every vertex in $Z$ has degree 2 in $G$, it follows that 
no vertex of $Z$ is in a singleton witness set of a $K_{p,p}$-witness structure of $G$. 
In other words, each vertex of $Z$ has at least one edge incident on it in $F$. 
For every $z \in Z$, add one such edge to $F'$. 
Then, $|F'|=|Z|$ as no edge in $F$ has both endpoints in $Z$
since $Z$ is an independent set. Now, we have $G/F'=H$ and 
$H/(F \setminus F')=K_{p,p}$. 
Since $|F'|=|Z|$, it follows that $|F \setminus F'| \leq k$ and thus $H$ is $k$-contractible to a balanced biclique.
\end{proof}

It is easy to verify that {\sc Balanced Biclique Contraction} is in \NP. This fact, along with Lemma \ref{lem:np-hard-bal-biclique-correctness}, establishes the second part of Theorem \ref{thm:np-hardness}.

\section{FPT Algorithms}
\label{sec:fpt-algo}
%Before we begin to describe the algorithms, we mention a result that we subsequently use.  
In this section, we show that \textsc{Biclique Contraction} and \textsc{Balanced Biclique Contraction} can be solved in $\mathcal{O}^*(25.904^k)$ time. 
For the sake of convenience and to avoid unnecessary repetition, we first describe the algorithm for \textsc{Biclique Contraction} and later mention the (minor) changes required in order to solve the balanced variant.% \textsc{Balanced Biclique Contraction}. 

H{\"{u}}ffner et al.~\cite{HuffnerKMN10} gave an $\mathcal{O}(1.4^k \cdot k^d + n^{3})$ time algorithm for the \textsc{$d$-Cluster Vertex Deletion} problem in which given a graph $G$ and integer $k$, the goal is to determine whether one can delete at most $k$ vertices such that the resultant graph is the disjoint union of at most $d$ complete graphs.
As $ G-Z$ is a biclique if and only if $\overline{G} - Z$ is the disjoint union of at most two complete graphs, we have the following result.

\begin{proposition}{\em ~\cite{HuffnerKMN10}}
\label{prop:bc-v-algo}
Given a graph $G$ and an integer $k$, there is an algorithm that either computes a biclique modulator $Z$ of size at most $2k$ or correctly determines that no such modulator exists in $\mathcal{O}^*(1.4^{|Z|})$ time.
\end{proposition}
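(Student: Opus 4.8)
The plan is to exploit the complementation duality stated just above the proposition: a set $Z \subseteq V(G)$ is a biclique modulator of $G$ if and only if $Z$ is a solution to \textsc{$2$-Cluster Vertex Deletion} on $\overline{G}$. Granting this equivalence, the algorithm simply computes $\overline{G}$ in polynomial time and runs the algorithm of H{\"{u}}ffner et al.~\cite{HuffnerKMN10} for \textsc{$d$-Cluster Vertex Deletion} with $d = 2$ on $\overline{G}$. Any solution it returns is, by the equivalence, a biclique modulator of $G$ of the same size, and if it reports that no solution of the given size exists then no biclique modulator of that size exists either. Thus correctness is inherited directly from the cited algorithm once the duality is established.

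First I would verify the duality carefully, including the degenerate cases. If $G - Z$ is a biclique with bipartition $\langle X, Y \rangle$, then within $X$ and within $Y$ there are no edges while every $X$--$Y$ pair is an edge; passing to the complement, $X$ and $Y$ each become cliques and all $X$--$Y$ pairs become non-edges, so $\overline{G} - Z$ is $K_{|X|} + K_{|Y|}$, the disjoint union of at most two complete graphs. The reverse implication is symmetric. The key identity making $Z$ transfer verbatim between the two problems is that complementation and vertex deletion commute, i.e.\ $\overline{G} - Z = \overline{G - Z}$. The boundary cases are consistent with the conventions of the excerpt: the edgeless graph is a biclique (one or both parts empty) and is also a disjoint union of \emph{at most} two complete graphs, which is precisely why the ``at most two'' clause on the cluster side matches bicliques with an empty part.

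For the running time, the algorithm of~\cite{HuffnerKMN10} decides, for a fixed budget $b$, whether a solution of size at most $b$ exists in $\mathcal{O}(1.4^{b} \cdot b^{d} + n^{3})$ time, where $d = 2$ here. To obtain the bound $\mathcal{O}^*(1.4^{|Z|})$ in terms of the size of the returned modulator, I would run the decision routine for increasing budgets $b = 0, 1, 2, \dots$, capped at $k$, and stop at the first value that admits a solution; this value equals the minimum modulator size $|Z|$, and the total time is dominated by the final call, giving $\sum_{b \le |Z|} \mathcal{O}(1.4^{b} b^{2} + n^{3}) = \mathcal{O}^*(1.4^{|Z|})$. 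If every budget up to $k$ fails, the routine correctly reports that no modulator of size at most $k$ exists within time $\mathcal{O}^*(1.4^{k})$.

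Since the heavy lifting is done by the cited algorithm, there is no genuine obstacle in the argument. The only point requiring attention is confirming that the complement equivalence survives the degenerate cases discussed above; once this is checked, both correctness and the claimed running time follow immediately from the result of H{\"{u}}ffner et al.
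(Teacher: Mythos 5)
Your proposal is correct and takes essentially the same route as the paper, which likewise observes that $G-Z$ is a biclique if and only if $\overline{G}-Z$ is the disjoint union of at most two complete graphs and then invokes the $\mathcal{O}(1.4^k \cdot k^d + n^3)$-time algorithm of H{\"u}ffner et al.\ for \textsc{$d$-Cluster Vertex Deletion} with $d=2$ on the complement graph. Your explicit verification of the degenerate (empty-part) cases and the incremental-budget argument yielding the $\mathcal{O}^*(1.4^{|Z|})$ bound merely spell out details the paper leaves implicit.
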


%We are now ready to describe the algorithm for \textsc{Biclique Contraction}. 
Consider an instance $(G,k)$ of \textsc{Biclique Contraction}. 
We first determine if $G$ has a biclique modulator $Z$ of size at most $2k$ in $\mathcal{O}^*(1.4^{|Z|})$ time using Proposition \ref{prop:bc-v-algo}.  
If no such set exists, then by Observation \ref{obs:bc-vertex-sol}, we can declare that $G$ is not $k$-contractible to a biclique.  
Otherwise,  let $\langle X, Y \rangle$ denote a bipartition of $G-Z$,  and $\langle L, R \rangle$ denote a $k$-constrained valid partition of $V(G)$ (if one exists).  We begin by guessing the ``nature of the intersection'' (see below for the exhaustive cases) between $X,  Y$ and $L, R$, and then proceed to finding a $k$-constrained valid partition that respects this guess. 

Note that if $X \cup Y$ is an empty set,  then we can solve the problem by guessing the partition of
$Z$ in time $\calO^*(2^{|Z|})$.
Hence,  we consider the case when $X \cup Y \neq \emptyset$.
Without loss of generality, we can assume that $Y \neq \emptyset$. Consider the following cases.

\begin{enumerate}
\item $X \cap L=\emptyset$ and $X \cap R=\emptyset$. The two sub-cases are as follows.
\begin{enumerate}
\item \label{item:X-empty-Y-intersects-L}
Either $Y \cap L \neq \emptyset$ or $Y \cap R \neq \emptyset$ but not both. 
\item \label{item:X-empty-Y-intersects-both-LR}
$Y \cap L \neq \emptyset$ and $Y \cap R \neq \emptyset$.
\end{enumerate}
\item Either $X \cap L \neq \emptyset$ or $X \cap R \neq \emptyset$ but not both. The two sub-cases are as follows.
\begin{enumerate}
\item \label{item:X-intersects-L-Y-intersects-L}
Either $Y \cap L \neq \emptyset$ or $Y \cap R \neq \emptyset$ but not both. 
\item \label{item:X-intersects-L-Y-intersects-both-LR}
$Y \cap L \neq \emptyset$ and $Y \cap R \neq \emptyset$.
\end{enumerate}

\item $X \cap L \neq \emptyset$ and $X \cap R \neq \emptyset$.
The two sub-cases are as follows.
\begin{enumerate}
%\item $Y$ does not intersect both $L$ and $R$, i.e. set $Y$ is empty.
\item \label{item:X-intersects-both-LR-Y-intersects-L}
Either $Y \cap L \neq \emptyset$ or $Y \cap R \neq \emptyset$ but not both. 
\item \label{item:X-intersects-both-LR-Y-intersects-both-LR}
$Y \cap L \neq \emptyset$ and $Y \cap R \neq \emptyset$.
\end{enumerate}
\end{enumerate}

%\paragraph*{}
\noindent \textbf{Solving Case~(\ref{item:X-empty-Y-intersects-L})}: Without loss of generality let $Y \cap L = \emptyset$. 
Then, for each ordered partition $\langle Z_L, Z_R \rangle$ of $Z$ 
where $L \cap Z=Z_L$ and $R \cap Z=Z_R$, we determine if $\langle Z_L, Z_R \cup Y \rangle$ is a $k$-constrained valid partition.
We declare that $G$ is $k$-contractible to a biclique if and only if
some choice of $\langle Z_L, Z_R \rangle$ leads to a $k$-constrained valid partition. 
The total running time of the algorithm in this case is $\mathcal{O}^*(2^{|Z|})$.

\noindent \textbf{Solving Case~(\ref{item:X-empty-Y-intersects-both-LR})}: 
Consider an ordered partition $\langle Z_L, Z_R \rangle$ of $Z$ 
where $L \cap Z=Z_L$ and $R \cap Z=Z_R$. 
Observe that as $X=\emptyset$, $G$ is connected and 
$Y$ is an independent set, every vertex in $Y$ is adjacent 
to some vertex in $Z_L$ or $Z_R$. 
Next, we apply the following branching rule repeatedly 
to vertices in $Y$ as long as possible. %based on a biclique modulator. 

\begin{branching rule}
\label{br:X-Y-branch}
%Let $Z$ be a biclique modulator of $G$ and $\langle X,Y \rangle$ denote a bipartition of $G-Z$. 
If there is a vertex $v \in Y$ such that $N(v) \cap Z_L  \neq \emptyset$, $N(v) \cap Z_R  \neq \emptyset$ and $|N(v) \cap (Z_L \cup Z_R)|>2$, then branch into the following.
\begin{itemize}
\item Contract all edges in $E(v,Z_L)$ and decrease $k$ by the number of contractions.
\item Contract all edges in $E(v,Z_R)$ and decrease $k$ by the number of contractions.
\end{itemize}
\end{branching rule}
%Apply Branching Rule \ref{br:X-Y-branch} repeatedly to vertices in $Y$ as long as possible. 

The exhaustiveness (and hence the correctness) of Branching Rule \ref{br:X-Y-branch} is easy to verify. 
Also, observe that the branching factor leading to worst-case running time on applying this rule is $(1,2)$. 
After an exhaustive application of Branching Rule \ref{br:X-Y-branch}, any vertex in $Y$ that is adjacent to both $Z_L$ and $Z_R$ has degree 2.   
Then, we apply the following reduction rule.% on these vertices. 
%Now, we define the following reduction rule based on a biclique modulator.

\begin{preprocessing rule}
\label{rr:X-Y-reduce}
%Let $Z$ be a biclique modulator of $G$ and $\langle X,Y \rangle$ denote a bipartition of $G-Z$. 
%Let $\langle Z_L, Z_R \rangle$ be a partition of $Z$ such that $Z_L=L \cap Z$ and $Z_R=R \cap Z$. 
If there is a vertex $v \in Y$ of degree 2 such that $N(v) \cap Z_L  \neq \emptyset$ and $N(v) \cap Z_R  \neq \emptyset$, then contract edges in $E(v,Z_L)$ and decrease $k$ by the number of contractions.
\end{preprocessing rule}

To argue the correctness of this rule, it suffices to show that if $v \notin L$, then $\langle L \cup \{v\}, R \setminus \{v\} \rangle$ is also a $k$-constrained valid partition. 
Let $a \in Z_L$ and $b \in Z_R$ be the neighbours of $v$. 
Let $C_a$ and $C_b$ denote the components of $G[L]$ and $G[R]$ containing $a$ and $b$, respectively.  
Note that $v$ is a pendant vertex in $C_b$. 
The only pair of components of $G[L]$ and $G[R]$ that are adjacent, possibly because of $v$ being in $R$ are $C_a$ and $C_b$. 
Moving $v$ from $C_b$ to $C_a$ does not affect this adjacency or the connectedness of these components since $v$ has exactly 1 neighbour in both $V(C_b)$ and $V(C_a)$. 
The sizes of the spanning forests of $G[L]$ and $G[R]$ before and after the movement of $v$ are equal. 
Hence, $\langle L \cup \{v\}, R \setminus \{v\} \rangle$ is also a $k$-constrained valid partition. 

When neither Branching Rule \ref{br:X-Y-branch} nor Preprocessing Rule \ref{rr:X-Y-reduce} are applicable, we have the partition $\langle Y_L, Y_R \rangle$ of $Y$ where $Y_L=\{y \in Y \mid N(y) \subseteq Z_L\}$ and $Y_R=\{y \in Y \mid N(y) \subseteq Z_R\}$. Let us first consider the case when no vertex in $Y_L \cup Y_R$ is a trivial component of $G[L]$ or $G[R]$. Then, observe that every vertex in $Y_L$ has to be in the same part as $Z_L$ since $N(Y_L) \subseteq Z_L$.
Similarly, every vertex in $Y_R$ has to be in the same part as $Z_R$ since $N(Y_R) \subseteq Z_R$. 
In this case, we simply check if $(Z_L \cup Y_L, Z_R \cup Y_R)$ is a $k$-constrained valid partition. 
The total running time of the algorithm is $\mathcal{O}^*(2^{|Z|}\})$.
Subsequently, we assume that there is a vertex in $Y_L$ or $Y_R$ that is a trivial component of $G[L]$ or $G[R]$. Without loss of generality, let there be a vertex in $Y_L$ that is a trivial component of $G[L]$ or $G[R]$. 
A vertex $y$ of $Y_L$ that is a trivial component of $G[L]$ or $G[R]$ has to be in the same part as $Z_R$ since $y$ has neighbours in $Z_L$. 
Further, no vertex $y'$ in $Y_R$ can be a trivial component of $G[L]$ or $G[R]$, and such a vertex $y'$ has to be in the same part as $Z_R$.  
This is because if $y'$ is in $L$, then the component of $G[L]$ containing $y'$ is not adjacent to the component of $G[R]$ that contains $y$. 
On the other hand, if $y'$ is in $R$, then as $y'$ is adjacent to $Z_R$, it cannot be a trivial component of $G[R]$. 
Now, it follows that every vertex in $Y_R$ is in the same part as $Z_R$, as no vertex in $Y_R$ has neighbours in $Z_L$. 
Therefore, we contract all edges in $E(Y_R, Z_R)$ and decrease $k$ by the number of contractions. Next, we guess a subset $Z' \subseteq Z_L$ that will be in a component of $G[L]$ that has no other vertices from $Z_L$.
There are $\mathcal{O}(2^{|Z|})$ such subsets.
Note that if $Z'$ is empty, then no two vertices of $Z_L$ are in the same component of $G[L]$ or $G[R]$.
Consider the partition of $Y_L$ into three parts $Y^1,Y^2,Y^3$ where $Y^1=\{y \in Y \mid N(y) \subseteq Z'\}$, $Y^2=\{y \in Y \mid N(y) \cap Z' =\emptyset\}$ and $Y^3=Y \setminus (Y^1 \cup Y^2)$.
Then, $Y^1 \subseteq L$ as no vertex in $Y^1$ is a trivial component and $N(Y^1) \subseteq Z'$.
Also, $Y^2 \subseteq L$ as no vertex in $Y^2$ has a neighbour in $Z_R$ or a neighbour in $Z'$.  
Similarly, $Y^3 \subseteq R$ as every vertex in $Y^3$ has a neighbour in $Z'$ and a neighbour outside $Z'$.
Therefore, we simply check if $(Z_L \cup Y^1 \cup Y^2, Z_R \cup Y^3)$ is a $k$-constrained valid partition.  
The total running time of the algorithm in this case is $\mathcal{O}^*(1.619^k 2^{2|Z|})$.

\noindent \textbf{Solving Case~(\ref{item:X-intersects-L-Y-intersects-L})}:  
Without loss of generality, let $X \cap R=\emptyset$. 
Suppose $ Y\cap L=\emptyset$.
Then, for each ordered partition $\langle Z_L, Z_R \rangle$ of $Z$ 
where $L \cap Z=Z_L$ and $R \cap Z=Z_R$, we simply determine 
if $\langle Z_L \cup X, Z_R \cup Y \rangle$ is a $k$-constrained valid partition. 
The total running time in this case is $\mathcal{O}^*(2^{|Z|})$.
Now,  suppose $ Y\cap R=\emptyset$.
Then, for each ordered partition $\langle Z_L, Z_R \rangle$ of $Z$
where $L \cap Z=Z_L$ and $R \cap Z=Z_R$, we simply 
determine if $\langle Z_L \cup X \cup Y, Z_R \rangle$ is a $k$-constrained valid partition.
The total running time in this case is $\mathcal{O}^*(2^{|Z|})$.

\noindent \textbf{Solving Case~(\ref{item:X-intersects-L-Y-intersects-both-LR})}: Without loss of generality let $X \cap R=\emptyset$. 
Consider an ordered partition $\langle Z_L, Z_R \rangle$ of $Z$ where 
$L \cap Z=Z_L$ and $R \cap Z=Z_R$. 
Guess a vertex $y \in Y \cap R$. 
There are at most $n$ choices for $y$. 
Contract $E(y, X) \cup E(y, Z_L)$ and decrease $k$ by the number of contractions. 
Observe that now we have $|X|=1$. 
Then, this case is similar to Case 1(b) after moving the vertex in $X$ to $Z$. 
The total running time in this case is $\mathcal{O}^*(1.619^k \cdot 2^{2|Z|})$.

\noindent \textbf{Solving Case~(\ref{item:X-intersects-both-LR-Y-intersects-L})}: This is similar to Case~(\ref{item:X-intersects-L-Y-intersects-both-LR}).

\noindent \textbf{Solving Case~(\ref{item:X-intersects-both-LR-Y-intersects-both-LR})}: 
In this case, all edges in $E(X,Y)$ except one get contracted. 
Hence, if $|X \cup Y|>k+2$, we declare that $(G,k)$ is a \no-instance. 
Otherwise, $|V(G)| \leq |Z|+k+2$. 
Then, we go over each ordered partition of $V(G)$ into two parts $\langle Q_L, Q_R \rangle$ and determine if $\langle Q_L, Q_R \rangle$ is a $k$-constrained valid partition. 
The running time in this case is $\mathcal{O}^*(2^{|Z|+k})$.

Now, we have the following result. 

\begin{lemma}
Given a graph $G$ and a biclique modulator $Z$, there is an algorithm that determines if $G$ is $k$-contractible to a biclique or not in $\mathcal{O}^*(4^{|Z|} 1.619^k)$ time.
\end{lemma}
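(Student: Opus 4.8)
The plan is to assemble the case analysis developed above into a single decision procedure and then bound its running time by the sum of the per-case costs. First I would use the given biclique modulator $Z$ to fix, in polynomial time, a bipartition $\langle X, Y\rangle$ of the biclique $G-Z$. The object we are searching for is a $k$-constrained valid partition $\langle L, R\rangle$ of $V(G)$, which by Lemma~\ref{lem:bc-part} exists if and only if $(G,k)$ is a \yes-instance, so the algorithm accepts precisely when one is found. If $X\cup Y=\emptyset$ then $V(G)=Z$, and I would simply enumerate all $2^{|Z|}$ ordered partitions $\langle Z_L, Z_R\rangle$ of $Z$ and test each for being $k$-constrained valid, costing $\mathcal{O}^*(2^{|Z|})$. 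Otherwise, since the two parts of the bipartition of $G-Z$ play symmetric roles, I may assume $Y\neq\emptyset$.

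Next I would argue exhaustiveness. Because the definition of a $k$-constrained valid partition is symmetric under swapping $L$ and $R$, and because $X$ and $Y$ are interchangeable as the two sides of the bipartition of $G-Z$, every configuration of $(X\cap L, X\cap R)$ (both empty, exactly one empty, neither empty) combined with every configuration of $(Y\cap L, Y\cap R)$ collapses, after relabeling, into exactly one of Cases~(\ref{item:X-empty-Y-intersects-L})--(\ref{item:X-intersects-both-LR-Y-intersects-both-LR}). For each of the finitely many cases I would invoke the corresponding subroutine described above and declare $(G,k)$ a \yes-instance exactly when some case returns a $k$-constrained valid partition. The correctness of each individual subroutine (Branching Rule~\ref{br:X-Y-branch}, Preprocessing Rule~\ref{rr:X-Y-reduce}, and the various guesses of how $Z$ and the trivial components of $G[L],G[R]$ split) has already been established, so correctness of the overall procedure follows from the exhaustiveness of the case list.

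Finally I would total the running times. Cases~(\ref{item:X-empty-Y-intersects-L}) and (\ref{item:X-intersects-L-Y-intersects-L}) run in $\mathcal{O}^*(2^{|Z|})$, Case~(\ref{item:X-intersects-both-LR-Y-intersects-both-LR}) in $\mathcal{O}^*(2^{|Z|+k})$, and Cases~(\ref{item:X-empty-Y-intersects-both-LR}), (\ref{item:X-intersects-L-Y-intersects-both-LR}) and (\ref{item:X-intersects-both-LR-Y-intersects-L}) each in $\mathcal{O}^*(1.619^k\cdot 2^{2|Z|})=\mathcal{O}^*(1.619^k\cdot 4^{|Z|})$, the last being the dominant term. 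Since each case is executed once, the three branching cases govern the overall cost and the remaining contributions are subsumed, giving a total of $\mathcal{O}^*(4^{|Z|}\,1.619^k)$ as claimed.

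The main obstacle I anticipate is not the running-time arithmetic but the careful verification of exhaustiveness together with the ``without loss of generality'' reductions. I must check that relabeling $L\leftrightarrow R$ and $X\leftrightarrow Y$ legitimately reduces the full list of intersection patterns to the six stated cases, and that in the derived cases the transformations preserve both the budget and the property of being a $k$-constrained valid partition. For instance, reducing Case~(\ref{item:X-intersects-L-Y-intersects-both-LR}) to Case~(\ref{item:X-empty-Y-intersects-both-LR}) proceeds by guessing a vertex $y\in Y\cap R$, contracting $E(y,X)\cup E(y,Z_L)$, and absorbing the single remaining vertex of $X$ into $Z$; I would confirm that the decrease in $k$ exactly accounts for the contracted edges and that no valid solution is lost or spuriously created in the process.
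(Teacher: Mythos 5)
Your proposal matches the paper's own argument essentially verbatim: the lemma is obtained there by precisely this assembly of the six intersection cases (after the WLOG relabelings of $L \leftrightarrow R$ and $X \leftrightarrow Y$), with correctness inherited from the already-verified per-case subroutines and the running time taken as the maximum of the per-case bounds, dominated by $\mathcal{O}^*(1.619^k \cdot 2^{2|Z|}) = \mathcal{O}^*(4^{|Z|}\,1.619^k)$. One caveat you share with the paper: the Case~3(b) cost $\mathcal{O}^*(2^{|Z|+k})$ is not literally subsumed by $\mathcal{O}^*(4^{|Z|}\,1.619^k)$ when $|Z| < k\log_2(2/1.619) \approx 0.305\,k$, an imprecision that is harmless for Theorem~\ref{thm:fpt-algo} since there $|Z| \le 2k$ and $2^{3k} \le 25.904^k$.
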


As $Z$ can be obtained in $\mathcal{O}^*(1.4^{|Z|})$ time where $|Z| \leq 2k$, the claimed running time of $\mathcal{O}^*(25.904^k)$ to solve \textsc{Biclique Contraction} follows establishing part 1 of Theorem \ref{thm:fpt-algo}. 

The only changes required in the algorithm for \textsc{Balanced Biclique Contraction} are in the places where a partition of the graph into two parts is verified if it is a $k$-constrained valid partition or not. 
This check has to be modified such that the verification is made to determine if the partition is a $k$-constrained valid balanced partition or not. 
The remaining parts of the algorithm remain as such. 
This establishes part 2 of Theorem \ref{thm:fpt-algo}. 
\section{Kernelization Complexity} 
\label{sec:ker}
%when parameterized by 

We begin by observing that the reduction from \textsc{Red-Blue Dominating Set} to \textsc{Biclique Contraction} described in Section \ref{subsec:np-hard-biclique} is a polynomial parameter transformation as it maps an instance $(G, R, B, \kappa)$ to $(H,k=|B|+\kappa)$ where $|B|+\kappa$ is the parameter of the input instance and $k$ is the parameter of the output instance. Using the incompressibility and infeasibility of polynomial-size kernels (with respect to $|B|+\kappa$ as the parameter) result known for \textsc{Red-Blue Dominating Set} \cite{CyganFKLMPPS15, DomLS14}, we obtain part 2 of Theorem \ref{thm:kernels}. In fact, we can also conclude that {\sc Biclique Contraction} does not even admit a polynomial lossy kernel unless \NP\ $\subseteq$ \coNP/poly \cite{LokshtanovPRS17}. 

We proceed to describe a quadratic vertex kernel for \textsc{Balanced Biclique Contraction} using a sequence of reduction rules. 
The reduction rules are ordered (in the sequence stated) and a rule is applied on the instance only when none of the earlier reduction rules are applicable. Consider an instance $(G, k)$. Let $\mathcal{C}$ be a maximal collection of vertex-disjoint $K_1+K_2$ and $K_3$ in $G$. 
Let $Z= \bigcup_{C \in \mathcal{C}} V(C)$.  
 %We defer the correctness of the rules to the appendix.

\begin{reduction rule}
\label{rr:trivial-ker}
If $k \leq 0$ and $G$ is not a balanced biclique or if $|Z|>6k$, then return a trivial  \no-instance. %\red{constant-sized}
\end{reduction rule}

The correctness of the first part of the rule is easy to verify. 
Consider the second part.  
Suppose $(G,k)$ is a \yes-instance. 
Then, from Observation \ref{obs:bc-vertex-sol}, there is a set $\widehat{Z}$ of at most $2k$ vertices such that $G-\widehat{Z}$ is a biclique. 
Further, from Lemma \ref{lemma:biclique-forbidden-char}, $G-\widehat{Z}$ is $\{K_1+K_2,K_3\}$-free. 
As $\widehat{Z} \cap V(S) \neq \emptyset$ for each $S \in \mathcal{C}$, it follows that $|\mathcal{C}| \leq 2k$. 
Then, this implies that $|Z| \leq 6k$.

Let $\langle X, Y \rangle$ be a bipartition of $G - Z$ where $|X| \le |Y|$. 
Subsequently, we assume that $|Y| \geq k+3$. 
Otherwise, we have a linear vertex kernel. 

\begin{reduction rule}
\label{rr:X-Y-sizes}
If $|Y| > |X| + |Z| + k$, then return a trivial \no-instance.
\end{reduction rule}

\begin{lemma}
\label{lem:Y-not-so-large-X}
Reduction Rule \ref{rr:X-Y-sizes} is safe.
\end{lemma}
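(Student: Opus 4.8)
The plan is to prove safety in its contrapositive form. Since the rule returns a trivial \no-instance, it suffices to show that whenever $|Y| > |X| + |Z| + k$ the instance is already a \no-instance; equivalently, I would assume $(G,k)$ is a \yes-instance and deduce $|Y| \le |X| + |Z| + k$. By Lemma~\ref{lem:bbc-part} I fix a $k$-constrained valid balanced partition $\langle L, R \rangle$ of $V(G)$ and let $q$ denote the common number of components of $G[L]$ and $G[R]$; note $\rank(L) + \rank(R) \le k$. Since $\mathcal{C}$ is maximal, $G-Z$ is $\{K_1+K_2,K_3\}$-free and hence a biclique (Lemma~\ref{lemma:biclique-forbidden-char}) with bipartition $\langle X, Y \rangle$, so $Y$ is an \emph{independent} set in $G$ and every vertex of $Y$ has all its neighbours in $X \cup Z$. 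Each vertex of $Y$ either lies in a \emph{big} component (one with at least two vertices) of $G[L]$ or $G[R]$, or is a trivial component; I would bound these two types separately.

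First I would bound the vertices of $Y$ lying in big components. Let $y^{*}$ and $w^{*}$ be the number of $Y$-vertices and non-$Y$-vertices, respectively, that lie in big components across both $G[L]$ and $G[R]$, and let $c^{*}$ be the number of big components. The spanning forests of these big components together contain $(y^{*}+w^{*}) - c^{*}$ edges, which is at most $\rank(L)+\rank(R) \le k$. Because $Y$ is independent, every big component (having at least one edge) contains at least one non-$Y$ vertex, so $w^{*} \ge c^{*}$. Hence $y^{*} = \big[(y^{*}+w^{*})-c^{*}\big] - (w^{*}-c^{*}) \le k$. This sharp bound of $k$ (rather than the naive $2k$ coming from the total size of big witness sets) is exactly the place where the independence of $Y$ is essential, and I expect it to be the main obstacle.

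Next I would bound the $Y$-vertices that are trivial components. No $Y$-vertex can be a trivial component of $G[L]$ while another is a trivial component of $G[R]$: the singleton components $\{y\}$ and $\{y'\}$ would have to be adjacent by the second property of a valid partition, contradicting the independence of $Y$. So all such vertices lie on one side, say $R$ (the case $L$ is symmetric). If at least one such $y$ exists, then $\{y\}$ must be adjacent to every one of the $q$ components of $G[L]$, and since these are disjoint and $N(y) \subseteq X \cup Z$, we get $q \le |N(y)| \le |X| + |Z|$; these $Y$-singletons are distinct trivial components of $G[R]$, so their number is at most the number of components of $G[R]$, which equals $q$ by balancedness and is thus at most $|X|+|Z|$ (if no such singleton exists the bound is vacuous). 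Combining with the previous paragraph yields $|Y| \le k + |X| + |Z|$, which is the desired contradiction with $|Y| > |X|+|Z|+k$. The one subtlety to handle carefully is the use of balancedness to transfer the estimate $q \le |X|+|Z|$ obtained on the $L$-side into a bound on the number of singleton components on the $R$-side.
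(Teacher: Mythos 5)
Your proof is correct and follows essentially the same route as the paper: assume a \yes-instance, fix a balanced partition/witness structure, use the independence of $Y$ to bound by $k$ the $Y$-vertices involved in contractions (your spanning-forest count $y^{*} \le k$ matches the paper's observation that at most $k$ vertices of $Y$ are incident on edges of the solution $F$), note that all remaining $Y$-singletons lie on one side, and invoke balancedness to transfer a bound of $|X|+|Z|$ on the number of parts of the other side. The only cosmetic difference is that you bound the number of components of $G[L]$ by $|N(y)| \le |X|+|Z|$ for a fixed $Y$-singleton $y$ on the right, whereas the paper instead argues that every witness set on the left must contain a vertex of $X \cup Z$.
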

\begin{proof}
We will show that if $(G, k)$ is a \yes-instance, then $|Y| \le |X| + |Z| + k$. Suppose $G$ is $k$-contractible to the balanced biclique $G/F$ and $\calW$ is the $G/F$-witness structure of $G$.
Let $\calW_L$ and $\calW_R$ be the collection of witness sets 
corresponding to the bipartition $\langle L, R \rangle$ of $G/F$. 
Note that $|\calW_L| = |\calW_R|$. 
As $|F| \le k$ and $Y$ is an independent set in $G$, there are at most $k$ vertices in $Y$ that are incident on some edges in $F$.
Let $Y'$ be the collection of all such vertices.
As $|Y| \ge k + 1$, $Y \setminus Y'$ is a non-empty set.
Note that every vertex in $Y \setminus Y'$ is in a singleton witness set in $\calW$. 
As there is no edge between any of these singleton witness sets, all of these vertices are either in $\calW_L$ or in $\calW_R$. Without loss of generality,  let every vertex in $Y \setminus Y'$ be in a witness set in $\calW_R$.
Then, $|\calW_R| \ge |Y| - |Y'| \ge |Y| - k$. 
Further, there is no singleton witness set in $\calW_L$ that contains a vertex in $Y$ since such a witness set cannot be adjacent to a singleton witness set in $\calW_R$ that has a vertex in $Y \setminus Y'$.
This implies that every witness set in $\calW_L$ contains at least one vertex from $X \cup Z$.
Hence, $|X| + |Z| \ge |\calW_L|$.
However, $|\calW_L| = |\calW_R| \ge |Y| - k$ which implies $|X| + |Z| \ge |Y| - k$.
\end{proof}

Now, we have the following important property on $G$: if $(G,k)$ is a \yes-instance then, in any $k$-constrained valid balanced partition $\langle L,R \rangle$ of $V(G)$ either $X \subseteq L$, $Y \subseteq  R$ or $X \subseteq R$, $Y \subseteq  L$. 
Suppose there are vertices $x_1,x_2 \in X$ such that $x_1 \in L$ and $x_2 \in R$. 
If $Y \subseteq L$, then since $|Y| \geq k+1$, the size of a spanning forest of $G[L]$ exceeds $k$. 
Similarly, if $Y \subseteq R$, then the size of a spanning forest of $G[R]$ exceeds $k$. 
Therefore, there are vertices $y_1,y_2 \in Y$ such that $y_1 \in L$ and $y_2 \in R$.
Let $|X \cap L|=\alpha_X$, $|X \cap R|=\beta_X$, $|Y \cap L|=\alpha_Y$ and $|Y \cap R|=\beta_Y$.
Then, any spanning forest of $G[L]$ has $k_\ell \geq \alpha_X + \alpha_Y-1$ edges and any spanning forest of $G[R]$ has $k_r \geq \beta_X + \beta_Y-1$ edges with $k_\ell+k_r>k$. 

Let $Z_X=\{z \in Z : |N(z) \cap Y| \ge k + 1 \}$, $Z_Y=\{z \in Z : |N(z) \cap X| \ge k + 1 \}$ and $Z'=Z \setminus (Z_X \cup Z_Y)$. 
Observe that if $Z_X \cap Z_Y \neq \emptyset$, then $(G,k)$ is a \no-instance. 
Suppose there is a vertex $z \in Z_X \cap Z_Y$ and $(G,k)$ is a \yes-instance with $\langle L,R \rangle$ being a $k$-constrained valid balanced partition of $V(G)$. 
Without loss of generality, let $X \subseteq L$, $Y \subseteq  R$. 
However, as $z$ has at least $k+1$ neighbours each in $X$ and $Y$, $z$ cannot be in the part containing $X$ or in the part containing $Y$, leading to a contradiction. 
Hence, we may assume that the sets $Z'$, $Z_X$, and $Z_Y$ partition $Z$. The next reduction rule is a simplification rule based on this partition.

\begin{reduction rule}
\label{rr:contract-Z-X-Y}
If there is an edge in $E(X,Z_X) \cup E(Y,Z_Y) \cup E(Z_X) \cup E(Z_Y)$, then contract it and decrease $k$ by 1.
\end{reduction rule}

During the contraction of an edge incident on a vertex $v$ in $X \cup Y$ in the application of Reduction Rule \ref{rr:contract-Z-X-Y}, the new vertex added in the process is renamed as $v$ and retained in $X \cup Y$. 
Observe that the resulting graph $G[X \cup Y]$ is also a biclique.  

\begin{lemma}
Reduction Rule \ref{rr:contract-Z-X-Y} is safe. 
\end{lemma}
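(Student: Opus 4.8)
The plan is to establish the equivalence that $(G,k)$ is a \yes-instance if and only if $(G/e,k-1)$ is a \yes-instance, where $e$ denotes the edge that Reduction Rule~\ref{rr:contract-Z-X-Y} contracts, using Lemma~\ref{lem:bbc-part} to move freely between contractions to balanced bicliques and $k$-constrained valid balanced partitions. The heart of the argument is that in \emph{any} $k$-constrained valid balanced partition $\langle L,R\rangle$ of a \yes-instance, both endpoints of $e$ lie on the \emph{same} side; once this is in place, contracting $e$ merely merges two vertices inside a single component of that side, and the constrained-partition conditions transfer almost mechanically.

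For the forward direction I would fix a $k$-constrained valid balanced partition $\langle L,R\rangle$ of $V(G)$. By the structural property established above (that $X$ and $Y$ lie on opposite sides), either $X\subseteq L,\,Y\subseteq R$ or $X\subseteq R,\,Y\subseteq L$; assume the former. I first claim $Z_X\subseteq L$: were some $z\in Z_X$ placed in $R$, then $z$ together with its at least $k+1$ neighbours in $Y\subseteq R$ would sit in one component of $G[R]$, forcing $\rank(R)\ge k+1>k$, a contradiction. Symmetrically $Z_Y\subseteq R$. Hence, for every edge of the four types $E(X,Z_X)$, $E(Y,Z_Y)$, $E(Z_X)$, $E(Z_Y)$, both endpoints of $e$ fall in $L$ or both fall in $R$; say both lie in $L$. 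As $e$ joins its endpoints they already share a component of $G[L]$, so contracting $e$ removes one vertex without changing the number of components of $G[L]$: thus $\rank(L)$ decreases by exactly one while $\rank(R)$ is unaffected. The merged vertex inherits all neighbours of the two contracted vertices, so the affected component stays adjacent to every component of $G[R]$, and the equality of the component counts on the two sides is preserved. The induced partition of $V(G/e)$ is therefore a $(k-1)$-constrained valid balanced partition.

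The backward direction is easier and essentially reverses this. Given a $(k-1)$-constrained valid balanced partition $\langle L',R'\rangle$ of $V(G/e)$, let $w$ be the vertex created from $e=ab$ and say $w\in L'$; replacing $w$ by $\{a,b\}$ produces $\langle L,R\rangle$ with $R=R'$. The component of $(G/e)[L']$ containing $w$ expands into one component of $G[L]$ holding both $a$ and $b$ --- they are joined by $e$, and each edge at $w$ lifts to an edge at $a$ or $b$ --- so the component counts on both sides are unchanged. Consequently $\rank(L)=\rank(L')+1$ while $\rank(R)=\rank(R')$, the balance is retained, and every cross-side adjacency survives. Thus $\rank(L)+\rank(R)\le k$ and $\langle L,R\rangle$ witnesses that $(G,k)$ is a \yes-instance.

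I expect the forward direction to be the main obstacle, and within it the placement of $Z_X$ and $Z_Y$. Everything rests on the defining threshold $k+1$ exceeding the budget $k$, which pins each high-degree vertex of $Z_X$ (respectively $Z_Y$) to the side of $X$ (respectively $Y$) in every valid partition. After that, checking that $\rank$ drops by exactly one and that the balanced and adjacency conditions endure is routine, using only that the two endpoints of $e$ already lie in a common component because $e$ connects them.
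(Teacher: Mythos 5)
Your proof is correct and follows essentially the same route as the paper: the key step in both is pinning $Z_X\subseteq L$ and $Z_Y\subseteq R$ in any $k$-constrained valid balanced partition (given the previously established property that $X$ and $Y$ lie on opposite sides), using exactly the argument that a misplaced $z$ together with its at least $k+1$ neighbours would force a spanning forest with more than $k$ edges. You additionally spell out the contraction and uncontraction bookkeeping for the two directions of the equivalence, which the paper leaves implicit after the placement claim; this is sound but not a different approach.
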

\begin{proof}
Suppose $(G,k)$ is a \yes-instance and $\langle L,R \rangle$ is a $k$-constrained valid balanced partition of $V(G)$. 
Without loss of generality, $X \subseteq L$, $Y \subseteq  R$. 
Since any $z \in Z_X$ has at least $k+1$ neighbours in $Y$, if $z \in R$, then any spanning forest of $G[R]$ has at least $k+1$ edges, leading to a contradiction. 
Similarly, any $z \in Z_Y$ has at least $k+1$ neighbours in $X$, and if $z \in L$, then any spanning forest of $G[L]$ has at least $k+1$ edges, leading to a contradiction. 
Therefore, $Z_X \subseteq L$ and $Z_Y \subseteq R$. 
This justifies contracting an edge in $E(X,Z_X) \cup E(Y,Z_Y) \cup E(Z_X) \cup E(Z_Y)$.
\end{proof}

Now, we have the following property on $G$: if $(G,k)$ is a \yes-instance then, in any $k$-constrained valid balanced partition $\langle L, R \rangle$ of $V(G)$, we have $X \cup Z_X \subseteq L$ and $Y \cup Z_Y \subseteq  R$. 
The final reduction rule marks certain essential vertices and deletes the non-essential ones.

\begin{reduction rule}
\label{rr:unmark}
Mark all vertices in $Z$, $N(Z') \cap X$ and $N(Z') \cap Y$. Further, for each vertex $z \in Z$, mark one of its non-neighbours (if it exists) each in $X \setminus N(Z')$ and $Y \setminus N(Z')$. After performing this marking scheme, if there are at least two unmarked vertices in $X$ and at least two unmarked vertices in $Y$, then delete two unmarked vertices $u \in X$ and $v \in Y$.
\end{reduction rule}

The correctness of this rule is justified by the following lemma.

\begin{lemma}
\label{lem:rr-unmark}
Reduction Rule~\ref{rr:unmark} is safe.
\end{lemma}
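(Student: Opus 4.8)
The plan is to prove both directions of the equivalence between $(G,k)$ and $(G',k)$, where $G' = G - \{u,v\}$ with $u \in X$ and $v \in Y$ the two deleted unmarked vertices. The engine of both directions is the orientation property established just before the rule: for a \yes-instance, every $k$-constrained valid balanced partition $\langle L,R\rangle$ may be taken (after possibly swapping $L$ and $R$) to satisfy $X \cup Z_X \subseteq L$ and $Y \cup Z_Y \subseteq R$. The first thing I would record is a description of $N_G(u)$ and $N_G(v)$. Since $u$ is unmarked it lies in $X \setminus N(Z')$, so $u$ has no neighbour in $Z'$; Reduction Rule~\ref{rr:contract-Z-X-Y} guarantees $u$ has no neighbour in $Z_X$; hence $N_G(u) \subseteq Y \cup Z_Y$. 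Symmetrically $N_G(v) \subseteq X \cup Z_X$. Under the orientation this means every neighbour of $u$ lies in $R$ and every neighbour of $v$ lies in $L$, so $u$ is a trivial component of $G[L]$ and $v$ is a trivial component of $G[R]$.

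For the forward direction I would take a $k$-constrained valid balanced partition $\langle L,R \rangle$ of $V(G)$ with the above orientation and simply restrict it to $V(G')$, obtaining $\langle L\setminus\{u\}, R\setminus\{v\}\rangle$. By the previous paragraph we delete exactly one trivial component from each side; trivial components contribute nothing to $\rank$, so the rank bound is preserved, the equality of the number of components on the two sides is preserved, and every surviving pair of components across the two sides was already adjacent (the components $\{u\}$ and $\{v\}$, which were adjacent to each other since $uv \in E(G)$, are both removed). Hence $\langle L\setminus\{u\}, R\setminus\{v\}\rangle$ is a $k$-constrained valid balanced partition of $V(G')$ and $(G',k)$ is a \yes-instance.

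For the backward direction I would start from a $k$-constrained valid balanced partition $\langle L',R'\rangle$ of $V(G')$ and add $u$ to $L'$ and $v$ to $R'$. First I re-establish the orientation for $G'$: the size argument of the paragraph preceding Reduction Rule~\ref{rr:contract-Z-X-Y} still applies because $|Y\setminus\{v\}| \ge k+2$ and $|X\setminus\{u\}|\ge 1$, giving (after a possible swap) $X\setminus\{u\}\subseteq L'$ and $Y\setminus\{v\}\subseteq R'$, and the threshold arguments defining $Z_X,Z_Y$ should place $Z_X\subseteq L'$ and $Z_Y\subseteq R'$. Granting this, $N_G(u)\subseteq Y\cup Z_Y \subseteq R'\cup\{v\}$ and $N_G(v)\subseteq X\cup Z_X\subseteq L'\cup\{u\}$, so $u$ and $v$ are again trivial components, which keeps $\rank$ unchanged and adds one component to each side, preserving balance. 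The remaining obligation is validity: $\{u\}$ must be adjacent to every component of $G[R'\cup\{v\}]$ and $\{v\}$ to every component of $G[L'\cup\{u\}]$. The adjacency between $\{u\}$ and $\{v\}$ is immediate; for every other component I would argue that it already contains a vertex of $Y$ (resp.\ $X$) or an appropriate $Z$-neighbour of $u$ (resp.\ $v$), which is exactly where the marking scheme is used, since it retains all of $Z$, the full sets $N(Z')\cap X$ and $N(Z')\cap Y$, and a witness (non-)neighbour of each $z\in Z$, so that the relevant adjacencies and the component structure of $G'$ faithfully reflect those of $G$ on the retained vertices.

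The main obstacle will be the backward direction, and within it two coupled points: re-establishing the orientation $Z_X\subseteq L'$, $Z_Y\subseteq R'$ when the deletion of a single neighbour $u$ (or $v$) can drop a vertex of $Z_Y$ (or $Z_X$) from $k+1$ down to exactly $k$ neighbours on the relevant side, thereby weakening the spanning-forest contradiction; and verifying validity after the reinsertion, i.e.\ ruling out a component on the opposite side that avoids $N(u)$ or $N(v)$. I expect both to be resolved precisely by the marking: keeping all of $Z$, the complete neighbourhoods into $X\cup Y$ of the low-degree vertices of $Z'$, and one retained non-neighbour per $z$ in each of $X\setminus N(Z')$ and $Y\setminus N(Z')$ ought to ensure that $G'$ cannot realise a valid balanced partition with a ``wrong'' placement of a $Z$-vertex, or with a component dodging $N(u)$ or $N(v)$, that $G$ could not already realise. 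Making this last implication airtight --- translating the marking guarantees into the exact adjacency and connectivity certificates needed --- is the technical heart of the proof.
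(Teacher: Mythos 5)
Your forward direction matches the paper's and is fine: under the orientation $X \cup Z_X \subseteq L$, $Y \cup Z_Y \subseteq R$, both $u$ and $v$ lie in trivial components (you correctly use that Reduction Rule~\ref{rr:contract-Z-X-Y} eliminated all edges in $E(X,Z_X) \cup E(Y,Z_Y)$ and that unmarked vertices avoid $N(Z')$), and deleting one trivial component from each side preserves the rank bound, the balance condition, and all surviving cross-adjacencies. The problem is the backward direction, where you stop exactly at the point that constitutes the actual proof: you write that the marking ``ought to ensure'' that no component of $G[R' \cup \{v\}]$ can dodge $N(u)$ and that making this airtight ``is the technical heart of the proof'' --- but you never give the argument, so the lemma is not established. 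The paper's mechanism is concrete. Write $G' = G - \{u,v\}$ and suppose $\{u\}$ is a trivial component of $G[L' \cup \{u\}]$ not adjacent to some component $H$ of $G[R' \cup \{v\}]$. Since $uv \in E(G)$, $Y \subseteq N(u)$, $X \setminus \{u\} \subseteq L'$, and any vertex of $Z_X$ in $R'$ retains at least $k \geq 1$ neighbours in $Y \setminus \{v\} \subseteq R'$ and hence lies in a component already containing a vertex of $Y$, necessarily $V(H) \subseteq Z' \cup Z_Y$. Pick $z \in V(H)$ and let $u^*$ be the \emph{marked} non-neighbour of $z$ in $X \setminus N(Z')$. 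Then $N(u^*) \subseteq Y \cup Z_Y$, so $\{u^*\}$ is a trivial component of $G'[L']$, and validity of $\langle L', R' \rangle$ forces the component of $G'[R']$ containing $z$ (which is $H$ itself, as $v \notin V(H)$) to contain a vertex of $Y$ --- contradicting $Y \subseteq N(u)$. This per-vertex ``retained non-neighbour'' contradiction is precisely what the second half of the marking scheme exists for, and it is the idea missing from your proposal; without it your backward direction is a restatement of the obligation, not a discharge of it.

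Moreover, the scaffolding you propose to substitute for it --- first re-establishing the full orientation $Z_X \subseteq L'$, $Z_Y \subseteq R'$ in $G'$ --- is not how the paper proceeds, and you yourself identify why it is shaky: after deleting $u$ (resp.\ $v$), a vertex of $Z_Y$ (resp.\ $Z_X$) may have exactly $k$ neighbours on the relevant side, so the spanning-forest argument no longer yields an outright contradiction. The paper sidesteps this entirely: its backward direction uses only $X \setminus \{u\} \subseteq L'$ and $Y \setminus \{v\} \subseteq R'$ (which do survive, since $|Y \setminus \{v\}| \geq k+2$ keeps the earlier size argument intact), and then the marked non-neighbour contradiction above; the placement of the $Z$-vertices in $\langle L', R' \rangle$ is never pinned down. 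So the gap is twofold: the key contradiction is absent, and the orientation claim you propose in its place is exactly the step that may genuinely fail.
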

\begin{proof}
Suppose $G$ is $k$-contractible to a balanced biclique. 
Consider a $k$-constrained valid balanced partition $\langle L,R \rangle$ of $V(G)$. 
Then, we know that $X \cup Z_X \subseteq L$ and $Y \cup Z_Y \subseteq R$.  
As $N(u) \subseteq Y \cup Z_Y$ and $N(v) \subseteq X \cup Z_X$, it follows that $u$ and $v$ are in trivial components of $G[L]$ and $G[R]$, respectively. 
Then, $\langle L \setminus \{u\}, R \setminus \{v\}\rangle$ is a $k$-constrained valid balanced partition of $V(G) \setminus \{u,v\}$.

Conversely, consider a $k$-constrained valid balanced partition $\langle L',R'\rangle$ of $V(G) \setminus \{u,v\}$.  
Let $G'$ denote $G-\{u,v\}$. 
Observe that $X \setminus \{u\} \subseteq L'$ and $Y \setminus \{v\} \subseteq R'$. 
%The proof is similar to the argument in the paragraph succeeding Lemma \ref{lem:Y-not-so-large-X}. 
We show that $\langle L' \cup \{u\}, R' \cup \{v\}\rangle$ is a $k$-constrained valid balanced partition of $V(G)$. 
Suppose $u$ is in a trivial component of $G[L' \cup \{u\}]$ and is not adjacent to some component $H$ of $G[R' \cup \{v\}]$. 
Then, $V(H) \subseteq Z' \cup Z_Y$ and let $z \in V(H) $. 
Let $u^*$ be a non-neighbour of $z$ in $X \setminus N(Z')$ that was marked. 
Since $N(u^*) \subseteq Y \cup Z_Y$, $u^*$ is in a trivial component of $G'[L']$. 
In order for the component of $G'[R']$ containing $z$ to be adjacent to the component of $G'[L']$ containing $u^*$, $z$ must be in a component that contains some vertex in $Y$. 
Then, since $Y \subseteq N(u)$, this leads to a contradiction.
A symmetric argument holds when $v$ is in a trivial component of $G[R' \cup \{v\}]$ and is not adjacent to some component of $G[L' \cup \{u\}]$. 
Therefore, $\langle L' \cup \{u\}, R' \cup \{v\}\rangle$ is a $k$-constrained valid balanced partition of $V(G)$. 
\end{proof}

Observe that after the application of Reduction Rule~\ref{rr:unmark}, there are $\mathcal{O}(k)$ unmarked vertices. 
Further, the marking rule procedure marks $\mathcal{O}(k^2)$ vertices. This establishes Theorem \ref{thm:kernels}.

\section{$K_{p,*}$-Contraction}
Contracting a graph to $K_{p, q}$ for fixed values of $p,q$ is well-studied~\cite{BrouwerV87}. In particular, contracting a graph to $K_{1, 1}$ or $K_{1, 2}$ is polynomial-time solvable while contracting a graph to $K_{2, 2}$ is \NP-complete. In this section, we study the complexity of the $K_{p,*}$-\textsc{Contraction} problem of contracting a graph to a biclique with exactly $p$ vertices in one part using at most $k$ edge contractions. 

%Ito et al.~\cite[Theorem~2]{ItoKPT11} showed that for each $p, q \ge 2$,  $\{K_{p,q}\}$-\textsc{Contraction} is \NP-complete. This result also establish the \NP-completeness of $\{K_{q,q}\}$-\textsc{Contraction} for each $q \geq 2$. 

\subsection{\NP-Completeness}
It is easy to verify that $K_{p, *}$-\textsc{Contraction} is in \NP\ for each $p \in \mathbb{N}^+$. 
The \NP-completeness of $K_{1, *}$-\textsc{Contraction} is already known from the work of Heggernes et al.~\cite{HeggernesHLLP14}. We show the \NP-hardness of $K_{p, *}$-\textsc{Contraction} for each $p \geq 2$ by giving a polynomial-time reduction from \textsc{Positive Not-All-Equal SAT (PNAESAT)}, thus establishing Theorem \ref{thm:np-hard-kcq-correctness}. In this problem, we are given a Boolean formula $\phi$ in conjunctive normal form in which no clause has negative literals and the objective is to determine if $\phi$ has a satisfying assignment in which every clause has a variable set to \true\ and a variable set to \false. The \NP-hardness of \textsc{PNAESAT} follows from the \NP-hardness of \textsc{Hypergraph $2$-Coloring}.

\begin{lemma}
\label{lem:np-hard-kcq-reduction}
There is a polynomial-time reduction that takes as input an instance $\phi$ of \textsc{PNAESAT} and returns an equivalent instance $(G, k)$ of $K_{p, *}$-\textsc{Contraction} such that $|V(G)| = \mathcal{O}(nm)$ and $k=3n$, where $n$ is the number of boolean variables and $m$ is the number of clauses in the input instance $\phi$. 
\end{lemma}
\begin{proof}
Consider an instance $\phi$ of \textsc{PNAESAT} with $V = \{x_1, x_2 \ldots x_n \}$ denoting the set of variables and $C= \{c_1,c_2, \ldots c_m\}$ denoting the collection of clauses. We construct an instance $(G, k=3n)$ of $K_{p, *}$-\textsc{Contraction} where $G$ is obtained from $\phi$ as follows.
\begin{itemize}
   \item Add two vertices $t$, $f$ and a set $M$ of $6n+p+1$ vertices  with every vertex in $M$ adjacent to both $t$ and $f$. 
    \item For each variable $x_i \in V$, add three vertices $x_i,x^t_i,x^f_i$ such that  $x^t_i$ is adjacent to $t$ and $x_i$ while $x^f_i$ is adjacent to $f$ and $x_i$. Let $X_V=\{x_i : i \in [n]\}$ and $Z=\{x^t_i,x^f_i : i \in [n]\}$. 
    %The set $X_v$ is the set of all such vertices. %$|X_v|=n$
    %\item For each variable $x_i \in V$, add two vertices $x^t_i$ and $x^f_i$ adjacent to $x_i$ such that $x^t_i$ is adjacent to $t$ and $x^f_i$ is adjacent to $f$. Let $Z$ be the set of all such vertices.% $Z = \{ x^t_i, x^f_i |\forall 1\leq i \leq n \}$ 
    \item Add a vertex $r$ adjacent to $t$, $f$ and $x^t_i$ and $x^f_i$ for each $i \in [n]$. 
    \item For each clause $c_i \in C$, add $6n+1$ vertices $c_i^1,c_i^2,\ldots,c_i^{6n+1}$. Let $C'$ denote the set of such vertices and let $C^j=\{c_i^j: i \in [m]\}$ for each $j \in [6n+1]$. %Let $C' = C^1 \cup C^2 \cup \ldots \cup C^{9n+3}$ where $C^j =  \bigcup_{i=1}^{m} \{c_i^j : j \in [9n+3]\} $. %Let $c_i^j$ be the $j^{th}$ copy of the vertex corresponding to $c_i$. %The set $C'$ is an independent set of size $m(9n+3)$.
    \item For each variable $x_p$ and clause $c_i$ such that $c_i$ has $x_p$, add an edge between $x_p$  and $c_i^j$ for every $j \in [6n+1]$.
    \item Add a set $L$ of $p-2$ vertices adjacent to  every vertex in $\{r\} \cup C' \cup M$.
\end{itemize}

\begin{figure}[t]
\begin{center}
\includegraphics[scale=0.3]{./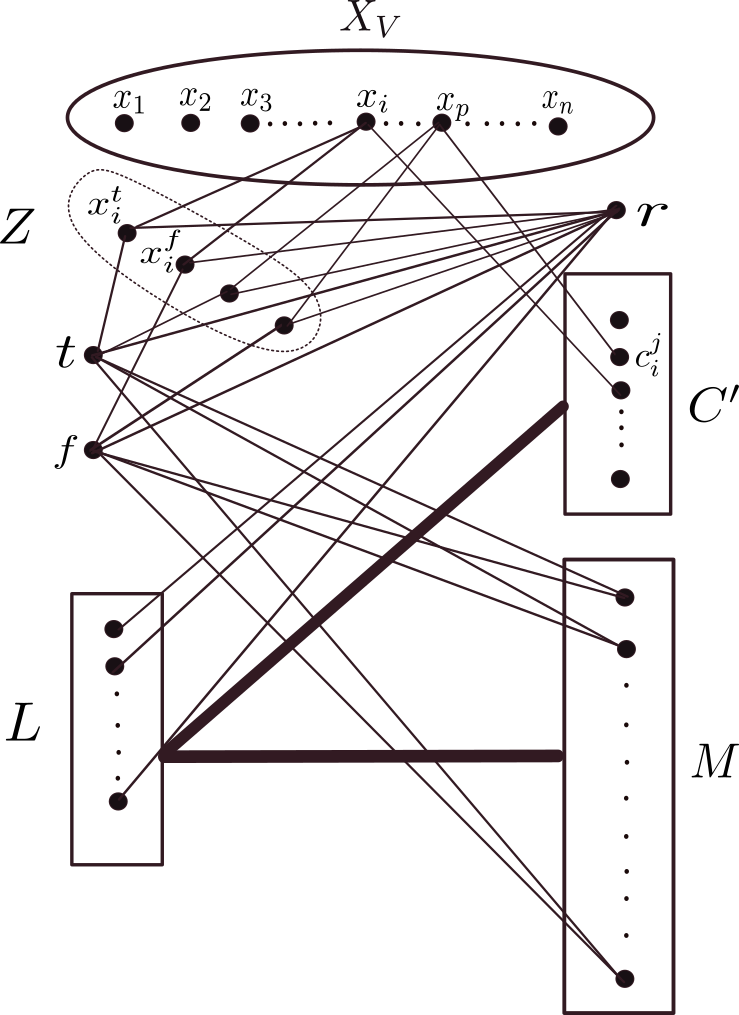}%, trim={2cm 21cm 15cm 2cm},clip
\caption{The graph $G$ corresponding to the formula $\phi$ in the reduction from \textsc{PNAESAT} to $K_{p, *}$-\textsc{Contraction}. A thick line between two sets of vertices denotes that every vertex of one set is adjacent to every vertex of the other set. 
\label{fig:np-hard-kcqbiclique}}
\end{center}
\end{figure}

%%$n+2+2n+c-2+m.(9n+3)+1+6n+4+2c$ which is $9mn+3m+3c+9n+5$. 
\noindent See Figure \ref{fig:np-hard-kcqbiclique} for an illustration. It is easy to verify that the reduction takes polynomial time and $G$ has $\mathcal{O}(mn)$ vertices. Now we proceed to the correctness.

$(\Rightarrow)$ Suppose $\alpha: V \rightarrow \{T, F \}$ is a satisfying assignment of $\phi$ such that every clause has a variable set to \true\ and a variable set to \false. We partition $V(G)$ into two parts $X$ and $Y$ as follows. 
\begin{itemize}
    \item $\{t,f\} \cup L \subseteq X$ and $C' \cup M \cup \{r\} \subseteq Y$.
    \item For each $i \in [n]$ such that $\alpha$ sets $x_i$ to \true,  add $x_i$, $x^t_i$ to $X$ and $x^f_i$ to $Y$.
    \item For each $i \in [n]$ such that $\alpha$ sets $x_i$ to \false,  add $x_i$, $x^f_i$ to $X$ and $x^t_i$ to $Y$.
\end{itemize}

\noindent Observe that $t$ and $f$ are in different components of $G[X]$. Further, each vertex of $L$ is a trivial component of $G[X]$. Thus, $G[X]$ has exactly $p$ components. All vertices in $Z \cap Y$ are in a single component of $G[Y]$ along with $r$. Every other component of $G[Y]$ is a trivial component. It is easy to verify that every component of $G[X]$ is adjacent to every component of $G[Y]$. Further, the number of edges in a spanning forest of $G[X]$ is $2n$ and the number of edges in the spanning forest of $G[Y]$ is $n$. Hence, $\langle X,Y \rangle$ is a $k$-constrained valid partition of $G$ with $G[X]$ containing exactly $p$ components.

$(\Leftarrow)$
Conversely, suppose $(G, k)$ is an \yes-instance of $K_{p, *}$-\textsc{Contraction}. 
Let $\langle L^*,R^* \rangle$ be a $k$-constrained valid partition of $V(G)$ given by Lemma~\ref{lem:bc-part}. Without loss of generality, let $G[L^*]$ have exactly $p$ components. As $|M| = 6n+p+1$ and there are at most $2k=6n$ vertices in big witness sets of the corresponding biclique witness structure, it follows that at least $p+1>p \geq 2$ vertices of $M$ are in trivial components of $G[R^*]$. As $t$ and $f$ are adjacent to all vertices in $M$, it follows that $t, f$ are in $L^*$. Each component of $G[R^*]$ (including the trivial components) is adjacent to every component of $G[L^*]$. We know that $|L|=p-2$ and the neighbourhood of each vertex in $M$ is $L \cup \{t,f\}$. Since $G[L^*]$ has exactly $p$ components, it follows that $L \subseteq L^*$ (in order for the trivial components of $G[R^*]$ consisting of vertices from $M$ to be adjacent to components of $G[L^*]$). Further, no two vertices of $L$ are in the same component of $G[L^*]$. Moreover, $t$ and $f$ are in two different components ($C_t$ and $C_f$, respectively) of $G[L^*]$. This implies that $r \in R^*$ and $M \subseteq R^*$. 

We claim that no vertex from $X_V$ is in $R^*$. Recall that each vertex $x_i \in X_V$ is adjacent to at least $6n+1$ vertices of $C^\prime$. As $|L^*| \leq 3n+p$ and $L \cup \{t,f\} \subseteq L^*$, at most $3n$ vertices of $C^\prime$ can be in $L^*$. If there is a vertex $x_i \in X_V \cap R^*$, then it is in the same component of $G[R^*]$ that contains at least $3n+1$ of its neighbours. This implies that $\langle L^*, R^* \rangle$ is not a $k$-constrained valid partition of $V(G)$ leading to a contradiction. Therefore $X_V \subseteq L^*$. Next, we claim that no vertex from $C^\prime$ is in $L^*$. Suppose $c^j_i$ is in $L^*$.  Then all vertices in $X_V$ corresponding to variables contained in clause $c_i$ are in the same component of $G[L^*]$ as $c^j_i$, say $C$. Then $G[L^*]$ has a component $\widehat{C} \neq C$ such that $V(\widehat{C}) \cap L=\emptyset$. As $|L^*| \leq 3n+p$, not every vertex in $\{c^1_i,c^2_i,\ldots,c^{6n+1}_i\}$ is in $L^*$. The vertices from this set that are in $R^*$ (as trivial components of $G[R^*]$) are not adjacent to $\widehat{C}$, leading to a contradiction. Therefore, $C^\prime \subseteq R^*$. As $\{r\} \cup C' \cup M \subseteq R^*$, every vertex in $L$ is in a trivial component of $G[L^*]$ and $X_V \subseteq V(C_t) \cup V(C_f)$. As $N(C^\prime) =X_V \cup L $, every vertex in $C'$ is in a trivial component of $G[R^*]$. Consider a vertex $c^j_i \in C'$. It follows that $C_t$ and $C_f$ each contain a vertex $x'$ and $x''$ such that the clause $c_i$ in $\phi$ contains the variables $x'$ and $x''$. Consider a truth assignment $\alpha$ for $\phi$ defined as follows.
\begin{itemize}
    \item If $x_i \in V(C_t)$ then set the variable $x_i \in V$ to \true.
    \item If $x_i \in V(C_f)$ then set the variable $x_i \in V$ to \false. 
\end{itemize}
Clearly, $\alpha$ is a satisfying assignment for $\phi$ in which each clause has a variable set to \true\ and a variable set to \false.
\end{proof}

\subsection{Fixed-Parameter Tractability}

We begin with the following result on $K_{1, *}$-\textsc{Contraction}. Observe that the objective is to contract the input graph to a star.

\begin{lemma}
\label{lemma:fpt-algo-1-witness-set-in-left}
There is an algorithm that given a graph $G$ and an integer $k$, determines whether $G$ is $k$-contractible to $K_{1, *}$ in $\mathcal{O}^*(2^k)$ time.
\end{lemma}
\begin{proof}
From Lemma \ref{lem:bc-part}, determining if $G$ is $k$-contractible to a star is equivalent to determining if $V(G)$ has a $k$-constrained valid partition $\langle L, R \rangle$ such that the number of components of $G[L]$ is one. 
We further claim that there is such a partition $\langle L, R \rangle$ where each component of $G[R]$ is trivial. That is, $L$ is a connected vertex cover of $G$ and $|L| \leq k+1$. 
Assume, on the contrary, that $G[R]$ has a non-trivial component $C$. Then, there are vertices $x \in L$ and $y \in V(C)$ such that $xy \in E(G)$. Consider a spanning forest of $C$ and root it at $y$. Let $z$ be one of the leaves of this forest. Now,  $\langle L \cup (V(C) \setminus \{z\}), R \setminus (V(C) \setminus \{z\}) \rangle$ is another $k$-constrained valid partition of $V(G)$. By repeating this procedure, we arrive at the required partition. 
Therefore, determining if $G$ is $k$-contractible to a star is equivalent to determining if $G$ has a connected vertex cover of size at most $k+1$. Using the current best \FPT\ algorithm for \textsc{Connected Vertex Cover} \cite{Cygan12}, it follows that this problem can be solved in $\mathcal{O}^*(2^k)$ time.
\end{proof}

Lemma \ref{lemma:fpt-algo-1-witness-set-in-left} proves the first part of Theorem \ref{thm:fpt-algo-c-witness-set-in-left}. Now we show an \FPT\ algorithm for $K_{p, *}$-\textsc{Contraction} for each $p \geq 2$.

\begin{lemma}
\label{thm:fpt-algo-c-witness-sets-in-left}
 For each $p \geq 2$, $K_{p, *}$-\textsc{Contraction} can be solved in $\mathcal{O}^*(3^k)$ time. 
\end{lemma}
\begin{proof}
Consider an instance $(G,k)$ of $K_{p, *}$-\textsc{Contraction}. From Lemma \ref{lem:bc-part}, determining if $G$ is $k$-contractible
to $K_{p, *}$ is equivalent to determining 
if $V(G)$ has a $k$-constrained valid partition $\langle L,R \rangle$ 
such that the number of components of $G[L]$ is $p$. We will now describe an algorithm that determines the existence
of a partition $\langle L, R \rangle$ of $V(G)$ with these properties. 
%Further, if $L_1$ and $L_2$ are the vertex sets of the two components of $G[L]$, then $\rank(L_1)+\rank(L_2) \leq k$ implying that $|L|=|L_1|+|L_2| \leq k+2$. 

We first select $p$ vertices from $V(G)$ with the interpretation that each of them is in a different component of $G[L]$. There are at most $n^p$ choices for selecting this set. Consider one such choice and let  $\widehat{L}$ denote these $p$ vertices. Let $\widehat{R} = \emptyset$. If $(G,k)$ is a \yes-instance, then $\widehat{L}$ and $\widehat{R}$ will grow to be the required sets $L$ and $R$ for some choice of the initial $p$ vertices. We proceed by applying the following reduction rule to $V(G)\setminus \widehat{L}$ repeatedly.

%Initially $\widehat{L}$ is a representative set for the $c$ connected components of $G[L]$. For each choice of $c$ vertices from $V(G)$, we subsequently apply a sequence of reduction rules and branching rules on $G$.  
    
\begin{reduction rule}
\label{rr:maintain-c-components}
If there is a vertex $v \in V(G)\setminus \{\widehat{L} \cup \widehat{R} \}$ such that $v$ is adjacent to two or more connected components of $G[\widehat{L}]$, then add $v$ to $\widehat{R}$.
\end{reduction rule}

The correctness of this rule is easy to verify given our interpretation of the initial $p$ vertices. During the entire course of the algorithm, at any point, if this rule is applicable, we apply it. Subsequently let $U$ = $V(G)\setminus \{\widehat{L} \cup \widehat{R} \}$. Now we apply the following branching rule to vertices in $U$.

\begin{branching rule}
\label{br:2-noureighb-branch}
If there is a vertex $v \in U$ such that $N(v) \cap \widehat{L}  \neq \emptyset$ and $N(v) \cap \widehat{R} \neq \emptyset$, then branch into the following.
\begin{itemize}
\item Contract all edges in $E(v,\widehat{L} )$ and decrease $k$ by the number of contractions. %Apply Reduction rule \ref{rr:maintain-c-components} if applicable.
\item Contract all edges in $E(v,\widehat{R} )$ and decrease $k$ by the number of contractions.
\end{itemize}
\end{branching rule}
The exhaustiveness (and hence the correctness) of Branching Rule \ref{br:2-noureighb-branch} follows from the fact that each $v \in V(G)$ is in $L$ or in $R$. Also, observe that the worst-case branching factor is $(1,1)$ for this rule. After an exhaustive application of this rule, any vertex in $U$ is adjacent to either $\widehat{L}$ or $\widehat{R}$. 
The total number of leaves of the search tree is $\mathcal{O}({2^{k^{\prime}}})$ where $k^{\prime} \leq k$.
At the leaves where $k^{\prime} = k$, we check if $\langle \widehat{L}, V(G) \setminus \widehat{L} \rangle$ is a $k$-constrained valid partition of $V(G)$ such that $G[\widehat{L}]$ has exactly $p$ components and declare that $(G,k)$ is a \yes-instance if it is indeed the case. At the leaves where $k^{\prime} < k$, we apply the following branching rule repeatedly.

\begin{branching rule}
\label{br:edge-U-branch}
If there is an edge $uv$ in $G[U]$, then branch into the following cases. 
\begin{itemize}
\item $\widehat{L} = \widehat{L} \cup \{u\}$.
\item $\widehat{L} = \widehat{L} \cup \{v\}$.
\item Contract the edge $uv$ into a new vertex $x$, add $x$ to $\widehat{R}$ and increase $k^{\prime}$ by one.
\end{itemize}
\end{branching rule}
At any point of time, if $k^{\prime} = k$, we stop branching and check if $\langle \widehat{L}, V(G) \setminus \widehat{L} \rangle$ is a $k$-constrained valid partition of $V(G)$ such that $G[\widehat{L}]$ has exactly $p$ components and declare that $(G,k)$ is a \yes-instance if it is indeed the case. To bound the number of nodes in the search tree, we associate a measure $\mu(N)$ to every node $N$ defined as $\mu(N) = k-{k^\prime}+p-|\widehat{L}|$. In the first and second branches, $|\widehat{L}|$ increases by $1$ and hence $\mu(N)$ drops by $1$. In the third branch, $|\widehat{L}|$ is unchanged but $k^\prime$ increases by $1$ and hence $\mu(N)$ drops by $1$. At any leaf of this search tree, one of the following cases holds.
\begin{itemize}
    \item Case 1:  $|\widehat{L}|=k-{k^\prime}+p$. In this case, we simply check if $\langle \widehat{L}, V(G)\setminus \widehat{L} \rangle$ is the required partition and declare that $(G,k)$ is a \yes-instance if it indeed is. %Correctness of this follows from the fact that in any \yes-instance $|\widehat{L}| \leq k-{k^\prime}+c$.
    \item Case 2: There is no edge $uv$ in $G[U]$. In this case, $G[U]$ induces an independent see,t and since $G$ is connected, every vertex in $G[U]$ has a neighbour in either $\widehat{L}$ or $\widehat{R}$ but not both (otherwise, Branching Rule \ref{br:2-noureighb-branch} would have been applied). For each $x \in U$ such that $x$ has a neighbour in $\widehat{L}$, move $x$ to $\widehat{L}$. Similarly, for each $x \in U$ such that $x$ has a neighbour in $\widehat{R}$, move $x$ to $\widehat{R}$. The correctness of this step is justified by the fact that any vertex $x \in U$ that has a neighbour in $\widehat{L}$ cannot be in the part $R$ of the desired partition $\langle L, R \rangle$ as it is adjacent to only one component of $G[L]$.  Any vertex $x \in U$ that has a neighbour in $\widehat{R}$ cannot be in the part $L$ of the desired partition $\langle L, R \rangle$ because part $L$ induces only $p$ connected components. Now we just check if $\langle \widehat{L}, \widehat{R} \rangle$ is the required partition and declare that $(G,k)$ is a \yes-instance if it is so.
\end{itemize}
If no leaf of the search tree ascertains that $(G,k)$ is a 
\yes-instance, then we declare that $(G,k)$ is a \no-instance. The total running time of this algorithm is  $\mathcal{O}({2^{k^{\prime}}} \cdot {3^{k-{k^{\prime}}+p-|\widehat{L}|}} \cdot n^p)$ which is $\mathcal{O}^*(3^k)$. 
\end{proof}

\section{Concluding Remarks}
\label{sec:concl}

In this work, we initiated the study of {\sc Biclique Contraction} and {\sc Balanced Biclique Contraction}. 
We showed \NP-completeness, fixed-parameter tractability and kernelization results for the problems. We also gave a faster  algorithm when the resultant biclique is restricted. The parameter of interest in all these results is the solution size, i.e., the number of edge contractions. A natural future direction is to study \textsc{(Balanced) Biclique Contraction} with respect to the size $\ell$ of the target (balanced) biclique as the parameter. Note that the parameterized complexity of \textsc{Balanced Biclique Vertex Deletion} with respect to the number $\ell$ of vertices in the resultant balanced biclique was a long-standing open problem until it was shown to be \W[1]-hard in \cite{Lin18}. However, the simple reduction that takes an instance $(H, k)$ of \textsc{Independent Set} and constructs an instance $(G, \ell)$ of \textsc{Biclique Contraction} where $G$ is obtained by adding a universal vertex to $H$ leads to the following result.

\begin{theorem}
\label{thm:w-hard-target-size}
{\sc Biclique Contraction} parameterized by the lower bound $\ell= n - k$ on the number of vertices in the resultant biclique is \W$[1]$-\hard. 
\end{theorem}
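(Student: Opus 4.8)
\textsc{Biclique Contraction} parameterized by $\ell = n - k$ is \W[1]-hard.

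The plan is to establish a parameterized reduction from \textsc{Independent Set} parameterized by the solution size, which is the canonical \W[1]-hard problem. Given an instance $(H, t)$ of \textsc{Independent Set} where $H$ has $n_0$ vertices, I would construct the graph $G$ by adding a single universal vertex $u$ to $H$; that is, $V(G) = V(H) \cup \{u\}$ and $u$ is adjacent to every vertex of $H$ while the edges inside $H$ are kept. I would then set the target lower bound to $\ell = t + 1$, equivalently the contraction budget $k = |V(G)| - \ell = (n_0 + 1) - (t+1) = n_0 - t$. The parameter of the constructed instance is $\ell = t+1$, which depends only on $t$, so this is a valid parameterized reduction.

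The heart of the argument is the following equivalence: $H$ has an independent set of size $t$ if and only if $G$ can be contracted to a biclique on at least $\ell = t+1$ vertices. First I would observe that after adding the universal vertex, any biclique obtained by contraction must place $u$ (or rather its witness set) on one side and use the other side to accommodate an independent set of $H$. The cleanest way to see this is through the $k$-constrained valid partition viewpoint of Lemma~\ref{lem:bc-part}: a partition $\langle L, R \rangle$ with $\rank(L) + \rank(R) \le k$ in which every component of $G[L]$ is adjacent to every component of $G[R]$. For the forward direction, given an independent set $I$ of size $t$ in $H$, I would put $L = \{u\} \cup (V(H) \setminus I)$ and $R = I$. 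Then $G[R]$ is edgeless with exactly $t$ trivial components (since $I$ is independent), $G[L]$ is connected through $u$ (as $u$ is universal), and every component of $G[R]$ is adjacent to the single component of $G[L]$ because $u$ is adjacent to every vertex of $I$. The spanning forest of $G[L]$ has $|L| - 1$ edges and $G[R]$ contributes nothing, giving a biclique $K_{1,t}$ on $t+1 = \ell$ vertices, as required.

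For the reverse direction, suppose $G$ contracts to a biclique $B$ with at least $\ell = t+1$ vertices, witnessed by a partition $\langle L, R \rangle$. The main obstacle—and the step I expect to require the most care—is arguing that one can always recover an independent set of size $t$ in $H$ from such a partition, controlling how the universal vertex $u$ and the edges of $H$ interact across the two sides. I would argue that the side not containing $u$'s witness set, say $R$, must consist of vertices whose singleton (trivial) components form an independent set in $H$, because any two adjacent vertices placed in distinct trivial components on the same side would be joined by an edge, contradicting that $R$ is a union of independent components in the biclique structure; vertices sharing a witness set, on the other hand, reduce the vertex count of $B$ and are paid for by the budget. Since $B$ has at least $t+1$ vertices and one side is essentially the contracted image of $u$'s side, the opposite side contributes at least $t$ distinct biclique-vertices, each corresponding to a witness set that I would show can be assumed to be a single vertex of $H$ forming part of an independent set; collecting these yields an independent set of size at least $t$ in $H$. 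Throughout I would lean on the fact that $u$ is adjacent to all of $H$, so $u$'s witness set can absorb an entire side while preserving the complete bipartite adjacency, which is precisely what makes the budget $k = n_0 - t$ both necessary and sufficient.
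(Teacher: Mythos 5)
Your reduction is precisely the paper's (the paper simply adds a universal vertex $u$ to the \textsc{Independent Set} instance $H$ and parameterizes by $\ell = t+1$, stating the reduction without a written correctness proof), and your fleshed-out equivalence is sound: since $u$ is universal, the side of any valid partition containing $u$ is connected and so contributes exactly one biclique vertex, forcing the opposite side to have at least $t$ pairwise non-adjacent components, one representative from each of which yields an independent set of size $t$ in $H$. Your intermediate claim that the witness sets on the opposite side ``can be assumed to be single vertices'' is an unnecessary (though provable) detour---picking an arbitrary representative per component already suffices, as distinct components of an induced subgraph are non-adjacent by maximality.
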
 

Observe that if $G$ is contractible to a biclique on $2 \cdot \ell$ vertices, then $G$ is also contractible to a biclique on $\ell$ vertices. Hence,  we can guess $\ell_0 \in \{\ell, \ell + 1, \dots, 2\cdot \ell\}$ where $\ell_0$ is the smallest integer greater than or equal to $\ell$ such that $G$ is contractable to a biclique on $\ell_0$ vertices. However, we could not obtain a simple algorithm running in $n^{f(\ell_0)}$ time for determining if $G$ can be contracted to a biclique on $\ell_0$ vertices. In contrast, if $\ell_0$ is the smallest integer greater than or equal to $\ell$ such that $G$ can be contracted into a balanced biclique on $\ell_0$ vertices, then there is no such easy upper bound on $\ell_0$. Hence, we conjecture  that \textsc{Balanced Biclique Contraction} is \para-\NP-\hard\ when parameterized by the lower bound $\ell= n - k$ on the number of vertices in the resultant balanced biclique.

%\appendix

\end{document}